\DeclareMathOperator{\polylog}{polylog}
\def\point{\tt{point}}
\def\rankop{\tt{rank}}
\title{Space Efficient Two-Dimensional Orthogonal Colored Range Counting} 
\titlerunning{Space-Efficient Orthogonal Colored Range Counting} 
\author{Younan Gao}{Faculty of Computer Science, Dalhousie University, Canada }{yn803382@dal.ca}{}{}
\author{Meng He}{Faculty of Computer Science, Dalhousie University, Canada}{mhe@cs.dal.ca}{}{}
\authorrunning{Y. Gao and M. He} 
\keywords{2D Colored orthogonal range counting, stabbing queries, geometric data structures} 
\begin{document}
	
	\maketitle
	
	\begin{abstract}
		 In the two-dimensional orthogonal colored range counting problem, we preprocess a set, $P$, of $n$ colored points on the plane, such that given an orthogonal query rectangle, the number of distinct colors of the points contained in this rectangle can be computed efficiently.
	For this problem, we design three new solutions, and the bounds of each can be expressed in some form of time-space tradeoff.
		By setting appropriate parameter values for these solutions, we can achieve new specific results with (the space are in words and $\epsilon$ is an arbitrary constant in $(0,1)$):
		\begin{itemize}
			\item $O(n\lg^3 n)$ space and $O(\sqrt{n}\lg^{5/2} n \lg \lg n)$ query time;
			\item $O(n\lg^2 n)$ space and $O(\sqrt{n}\lg^{4+\epsilon} n)$ query time;
			\item $O(n\frac{\lg^2 n}{\lg \lg n})$ space and $O(\sqrt{n}\lg^{5+\epsilon} n)$ query time;
			\item $O(n\lg n)$ space and $O(n^{1/2+\epsilon})$ query time.
		\end{itemize}
		A known conditional lower bound to this problem based on Boolean matrix multiplication gives some evidence on the difficulty of achieving near-linear space solutions with query time better than $\sqrt{n}$ by more than a polylogarithmic factor using purely combinatorial approaches. Thus the time and space bounds in all these results are efficient. 
		Previously, among solutions with similar query times, the most space-efficient solution uses $O(n\lg^4 n)$ space to answer queries in $O(\sqrt{n}\lg^8 n)$ time (SIAM. J. Comp.~2008).
		Thus the new results listed above all achieve improvements in space efficiency, while all but the last result achieve speed-up in query time as well. 
	\end{abstract}
	
	\newpage
	
	\section{Introduction}
	\label{sec:introduction}
	
		In computational geometry, there have been extensive studies on problems over points associated with information represented as colors \cite{gupta1995further,gupta1996algorithms, kaplan2006colored, kaplan2008efficient, lai2008approximate, grossi2014colored, nekrich2014efficient, el2017succinct, gupta2018computational, chan2020better, chan2020further, pathcoloredcounting2021, rahul2021approximate}. 
	Among them, the {\em 2D orthogonal range counting} query problem is one of the most fundamental.
	In this problem, we preprocess a set, $P$, of $n$ points on the plane, each colored in one of $C$ different colors, such that given an orthogonal query rectangle, the number of distinct colors of the points contained in this rectangle can be computed efficiently.
	
	This problem is important in both theory and practice. Theoretically, it has connections to matrix multiplication: The ability to answer $m$ colored range counting queries offline over $n$ points on the plane in $o(\min\{n,m\}^{\omega/2})$ time, where $\omega$ is the best current exponent of the running time of matrix multiplication, would yield a faster algorithm for Boolean matrix multiplication~\cite{kaplan2008efficient}. 
	In practice, the records in database systems and many other applications are often associated with categorical information which can be modeled as colors. 
	The Structured Programming Language (SQL) thus provides keywords such as {\texttt{DISTINCT}} and {\texttt{GROUP BY}} to compute information about the distinct categories of the records within a query range, which can be modeled using colored range query problems, 
	and these queries have also been used in database query optimization~\cite{ccmn2000}.
	
	One challenge in solving the 2D orthogonal range counting problem is that the queries are not easily decomposible: if we partition the query range into two or more subranges, we cannot simply obtain the number of distinct colors in the query range by adding up the number of distinct colors in each subrange.
	Furthermore, the conditional lower bound based on matrix multiplication as described above gives theoretical evidence on the hardness of this problem.
	Indeed, if polylogarithmic query times are desired, the solution with the best space efficiency~\cite{munro2015range} uses $O(n^2\lg n/\lg \lg n)$ words of space to answer queries in $O((\lg n/\lg \lg n)^2)$ time.
	There is a big gap between the complexities of this solution to those of the optimal solution to {\em 2D orthogonal range counting} for which we do not have color information and are only interested in computing the number of points in a 2D orthogonal query range.
	The latter can be solved in merely linear space and $O(\lg n/\lg \lg n)$ query time~\cite{jaja2004space}.

	Applications that process a significant amount of data would typically require structures whose space costs are much lower than quadratic. 
	As the running time of the best known combinatorial algorithm of multiplying two $n\times{}n$ Boolean matrices is $\Theta(n^3/\polylog(n))$~\cite{bw2012,c2015,y2018},
	the conditional lower bound of 2D orthogonal colored range counting implies that no solution can simultaneously have preprocessing time better than $\Omega(n^{3/2})$ and query time better than $\Omega(\sqrt{n})$, by purely combinatorial methods\footnote{When algebraic approaches are allowed, $\omega \approx 2.3727$~\cite{williams2012multiplying}, implying that the preprocessing time and the query time cannot be simultaneously less than $\Omega(n^{1.18635})$ and $\Omega(n^{0.18635})$, respectively.} with current knowledge, save for polylogarithmic speed-ups.
	To match this query time within polylogarithmic factors, the most space-efficient solution uses $O(n\lg^4 n)$ words of space to answer queries in $O(\sqrt{n}\lg^8 n)$ time~\cite{kaplan2008efficient}.
	Despite this breakthrough, the exponents in the polylogarithmic factors in the time and space bounds leave much room for potential improvements. 
	Hence, in this paper, we aim at decreasing these polylogarithmic factors in both space and time costs, to design solutions that are more desirable for applications that manage large data sets. 
	
	\subparagraph*{Previous Work.} 
	Gupta et al.~\cite{gupta1995further} showed how to reduce the orthogonal colored range searching problem in 1D to orthogonal range searching over uncolored points in 2D, thus achieving a linear-space solution with $O(\lg n/\lg \lg n)$ query time.
	Later, the query time was improved to $O(\lg C/\lg \lg n)$ by Nekrich~\cite{nekrich2014efficient}, where $C$ is the number of colors.
	
	To solve 2D colored orthogonal range counting, Gupta et al.~\cite{gupta1995further} used persistent data structures to extend their 1D solution to 2D and designed a data structure of $O(n^2\lg^2 n)$ words that supports queries in $O(\lg^2 n)$ time. 
	Kaplan et al.~\cite{kaplan2008efficient} achieved the same bounds by decomposing the input points into disjoint boxes in 3D and reduced the problem to 3D stabbing counting queries.
	Recently, Munro at al. \cite{munro2015range} showed that 3D 3-sided colored range counting can be answered in $O((\lg n/\lg \lg n)^2)$ time using a data structure of $O(n(\lg n/\lg \lg n))$ words of space, which implies a solution to 2D 3-sided colored range counting with the same time and space bound.
	For each distinct $x$-coordinate $x_i$ of the points in the point set, if we use the strategy in \cite{gupta1995further, kaplan2008efficient} to build a data structure supporting 2D 3-sided queries upon the points whose $x$-coordinates are greater than or equal to $x_i$, then this set of data structures constructed can be used to answer a 4-sided query.
	This yields a solution to the 2D orthogonal colored range counting problem with 
	$O(n^2\lg n/\lg \lg n)$ words of space and $O((\lg n/\lg \lg n)^2)$ query time.
	Kaplan et al.~\cite{kaplan2008efficient} further showed how to achieve time-space tradeoffs by designing a solution with $O(X\lg^7 n)$ query time that uses $O((\frac{n}{X})^2\lg^6 n+n\lg^4 n)$ words of space.
	Setting $X = \sqrt{n}\lg n$ minimizes space usage, achieving an $O(n\lg^4 n)$-word solution with $O(\sqrt{n}\lg^8 n)$ query time.
	When only linear space is allowed, Grossi and Vind \cite{grossi2014colored} showed how to answer a query in $O(n/\polylog(n))$ time.
	Though not explicitly stated anywhere, by combining an approach that Kaplan et al.~\cite{kaplan2008efficient} presented for dimensions of $3$ or higher (which also works for 2D) and a linear space solution to 2D orthogonal range emptiness~\cite{chan2011orthogonal}, the query time can be improved to $O(n^{3/4}\lg^{\epsilon} n)$ for any constant $\epsilon$ using a linear space structure.
	Finally, Kaplan et al. also considered the offline version of this problem and showed that $n$ 2D orthogonal colored range counting queries can be answered in $O(n^{1.408})$ time.
	
	Researchers have also studied approximate colored range counting problem. 
	In 1D, El-Zein et al.~\cite{el2017succinct} designed a succinct data structure to answer a $(1+\epsilon)$-approximate colored counting query in constant time.
	In 2D, Rahul~\cite{rahul2021approximate} provided a reduction from $(1+\epsilon)$-approximate orthogonal colored range counting to 2D colored orthogonal range reporting which reports the number of distinct colors in a 2D orthogonal query range. Based on this, they gave an $O(n\lg n)$-word data structure with $O(\lg n)$ query time.

	The orthogonal colored range counting problem has also been studied in higher dimensions~\cite{lai2008approximate,kaplan2008efficient, rahul2021approximate, pathcoloredcounting2021}. 
	Furthermore, He and Kazi~\cite{pathcoloredcounting2021} generalized it to categorical path counting by replacing the first dimension with tree topology.
	One of their solutions to a path query problem generalized from 2D orthogonal colored range counting also uses linear space and provides $O(n^{3/4}\lg^{\epsilon} n)$ query time. 
	We end this brief survey by commenting that, after a long serious of work \cite{janardan1993generalized, gupta1995further, kaplan2008efficient, nekrich2014efficient, grossi2014colored, gupta2018computational}, 
	Chan and Nekrich~\cite{chan2020better} solved the related 2D orthogonal range reporting problem in $O(n\lg^{3/4+\epsilon} n)$ words of space and $O(\lg \lg n+k)$ query time for points in rank space, where $k$ is output size.
	This almost matches the bounds of the optimal solution to (uncolored) 2D orthogonal range reporting over points in rank space, which uses $O(n\lg^{\epsilon} n)$ words to answer queries in $O(\lg \lg n+k)$ time.
	
	\begin{table}[t]
		\centering
		\caption{\label{tab} Bounds of 2D orthogonal colored range counting structures. 
		  The results in the form of time-space tradeoffs are listed in the top portion, in which $X$ and $\lambda$ are integer parameters in $[1,n]$ and $[2, n]$, respectively.
                  The bottom portion presents results with specific bounds, among which marked with a $^{\dagger}$ are those obtained from the top portion by setting appropriate parameters values. }
		\begin{threeparttable}
			\begin{tabularx}{\columnwidth}{l|l|l|X}
				\hline
				Source & Model & Query Time & Space Usage in Words \\
				\hline
				\cite{kaplan2008efficient}& PM & $O(X\lg^7 n)$ & $O((\frac{n}{X})^2\lg^6 n+n\lg^4 n)$ \\
				\hline
				Cor.~\ref{theorem: color_counting_pm} &PM& $O(\lg^5 n+X\lg^{3} n)$ & $O((\frac{n}{X})^2 \lg^4 n+ n\lg^3 n)$ \\
				\hline
				Thm.~\ref{theorem: color_counting_0} &RAM& $O(\lg^4 n+X\lg^{2} n \lg \lg n)$ & $O((\frac{n}{X})^2 \lg^4 n+ n\lg^3 n)$ \\
				\hline
				Thm.~\ref{theorem: color_counting_1} &RAM& $O(\lg^6 n + X\lg^{3+\epsilon} n)$ & $O((\frac{n}{X})^2 \lg^4 n+ n\lg^2 n)$ \\
				\hline
                                Thm.~\ref{theorem:ugly_tradeoff} &RAM& $O(\lambda^2 \lg^6 n \log^2_{\lambda} n + X \lg^{3+\epsilon} n \lambda \log_{\lambda} n)$ & $O((\frac{n}{X})^2 \lg^2 n \log^2_{\lambda} n+n\lg n \log_{\lambda} n)$ \\
				\hline \hline
				\cite{gupta1995further,kaplan2008efficient} & PM & $O(\lg^2 n)$ & $O(n^2\lg^2 n)$ \\
				\hline
				\cite{munro2015range} & RAM & $O((\lg n/\lg \lg n)^2)$ & $O(n^2\lg n/\lg \lg n)$ \\
				\hline
				\cite{kaplan2008efficient}$^{\dagger}$ & PM & $O(\sqrt{n}\lg^8 n)$ & $O(n\lg^4 n)$   \\
				\hline
			        Cor.~\ref{theorem: color_counting_pm}$^{\dagger}$ & PM & $O(\sqrt{n}\lg^{7/2} n)$ & $O(n\lg^3 n)$   \\
				\hline
				Thm.~\ref{theorem: color_counting_0}$^{\dagger}$ & RAM & $O(\sqrt{n}\lg^{5/2} n \lg \lg n)$ & $O(n\lg^3 n)$   \\
				\hline
				Thm.~\ref{theorem: color_counting_1}$^{\dagger}$ & RAM & $O(\sqrt{n}\lg^{4+\epsilon} n)$ & $O(n\lg^2 n)$  \\
				\hline
				Thm.~\ref{theorem:ugly_tradeoff}$^{\dagger}$ &RAM& $O(\sqrt{n}\lg^{5+\epsilon} n)$ & $O(n\frac{\lg^2 n}{\lg \lg n})$ \\
				\hline
				Thm.~\ref{theorem:ugly_tradeoff}$^{\dagger}$ &RAM& $O(n^{1/2+\epsilon})$ & $O(n\lg n)$ \\
				\hline
				\cite{kaplan2008efficient} &PM& $O(n^{3/4}\lg n)$ & $O(n\lg n)$ \\
				\hline
				\cite{grossi2014colored}& RAM & $O(n/\polylog(n))$ & $O(n)$ \\
				\hline
				\cite{kaplan2008efficient,chan2011orthogonal} & RAM & $O(n^{3/4}\lg^{\epsilon} n)$ & $O(n)$ \\
				\hline
			\end{tabularx}
		\end{threeparttable}
	\end{table}
	%
	\subparagraph*{Our Results.} 
	Under the word RAM model, we present three results, all in the form of time-space tradeoffs, for two-dimensional orthogonal colored range counting.
	Specifically, for an integer parameter $X\in[1, n]$, we propose solutions (all space costs are in words): 
	\begin{itemize}
		\item  with $O((\frac{n}{X})^2 \lg^4 n+ n\lg^3 n)$ space and $O(\lg^4 n+X\lg^{2} n \lg \lg n)$ query time; setting $X = \sqrt{n\lg n}$ achieves $O(n\lg^3 n)$ space and $O(\sqrt{n}\lg^{5/2} \lg \lg n)$ query time;
		\item  with $O((\frac{n}{X})^2 \lg^4 n+ n\lg^2 n)$ space and $O(\lg^6 n + X\lg^{3+\epsilon} n)$ query time for any constant $\epsilon\in(0,1)$; setting $X = \sqrt{n}\lg n$ achieves $O(n\lg^2 n)$ space and $O(\sqrt{n}\lg^{4+\epsilon})$ query time;
		\item  with $O((\frac{n}{X})^2 \lg^2 n \cdot \log^2_{\lambda} n+n\lg n \cdot \log_{\lambda} n)$ space and $O(\lambda^2\cdot \lg^6 n \cdot \log^2_{\lambda} n + X \cdot \lg^{3+\epsilon} n \cdot\lambda \log_{\lambda} n)$ query time for an integer parameter $\lambda\in[2, n]$; setting $X = \sqrt{n \lg n \log_{\lambda} n}$ and $\lambda = \lg^{\epsilon} n$ achieves $O(n\frac{\lg^2 n}{\lg \lg n})$ space and $O(\sqrt{n}\lg^{5+\epsilon'} n)$ query time for any $\epsilon'>2\epsilon$, while setting $X=\sqrt{n \lg n}$ and $\lambda = n^{\epsilon/5}$ achieves $O(n\lg n)$ space and $O(n^{1/2+\epsilon})$ query time.
	\end{itemize}
        
	When presenting each result, we also showed the bounds of the most space-efficient tradeoff that can be achieved by setting appropriate parameter values.
        The conditional lower bound based on Boolean matrix multiplication which we discussed before gives some evidence on the difficulty of achieving query time better than $\sqrt{n}$ by more than a polylogarithmic factor using combinatorial approaches without increasing these space costs polynomially.

        When comparing to previous results, note that only the time-space tradeoff presented by Kaplan et al.~\cite{kaplan2008efficient} could possibly achieve near-linear space and $O(\sqrt{n}\polylog(n))$ query time.
        More specifically, their solution uses $O((\frac{n}{X})^2\lg^6 n+n\lg^4 n)$ words of space to achieve $O(X\lg^7 n)$ query time.
        The most space-efficient tradeoff that could be obtained from it is an $O(n\lg^4 n)$-word structure with $O(\sqrt{n}\lg^8 n)$ query time.
        Thus we indeed achieve the goal of improving the polylogarithmic terms in both their time and space costs significantly.

        It is worthwhile to mention that the result of Kaplan et al. can work under the pointer machine (PM) model. Thus, for an absolutely fair comparison, we show how our first result can be adapted to the same model of computation to achieve $O((\frac{n}{X})^2 \lg^4 n+ n\lg^3 n)$ space and $O(\lg^5 n+X\lg^{3} n)$ query time. Thus under PM, we have an $O(n\lg^3 n)$-word structure with $O(\sqrt{n}\lg^{7/2} \lg n)$ query time.
        This is still a significant improvement over previous similar results.
        In the rest of the paper, however, we assume the word RAM model of computation unless otherwise specified, since most of our results are designed under it. 
        See Table~\ref{tab} for a comparison of our results to all previous results. 

        To achieve these results, we use the standard technique of decomposing a 4-sided query range to two 3-sided subranges with a range tree.
        Then the answer can be obtained by adding up the numbers of distinct colors assigned to points in each subrange and then subtracting the number of distinct colors that exist in both.
        We still use an approach of Kaplan et al. to reduce 2D 3-sided colored range counting to 3D stabbing queries over a set of boxes.
        What is new is our scheme of achieving time-space tradeoffs when computing the number of colors that exist in both subranges.
        Based on a parameter, we selectively precompute the sizes of the intersections between pairs of colors sets, each of which corresponds to a prefix of a certain box list somewhere in the stabbing query structures.
        Compared to the scheme of Kaplan et al. for the same purpose, ours gives more flexibility in the design of the 3D stabbing query structures that could work with the scheme.
        This extra flexibility further allows us to use and design different stabbing query structures to achieve new results. 
	\section{Preliminaries}
	\label{sec:prelim}
	In this section, we introduce some notation and previous results used in our paper.
	
	\subparagraph*{Notation.}
	Throughout this paper, we assume points are in general positions unless otherwise specified. 
	In three-dimensional space, we call a box $B$ {\em canonical} if it is defined in the form of $[x_1, +\infty)\times[y_1, y_2)\times [z_1, z_2)$, where $x_1, y_1, z_1 \in {\rm I\!R}$ and $y_2, z_2 \in {\rm I\!R} \cup \{+\infty\}$. 
	We use $B.x_1$ to refer to the lower bound of the $x$-range of $B$, 
	$B.y_1$ and $B.y_2$ to respectively refer to the lower and upper bounds of the $y$-range of $B$, and so on.
	Let $(p.x, p.y, p.z)$ denote the coordinates of a point $p$. 
	We say a point $q\in {\rm I\!R}^3$ {\em dominates} another point $p\in {\rm I\!R}^3$, if $ q.x \ge  p.x$, $q.y \ge p.y$ and $q.z \ge p.z$ hold simultaneously. 

	\subparagraph*{2D Orthogonal Colored Range Emptiness} An {\em orthogonal range emptiness} query determines whether an axis-aligned query rectangle contains at least one point in the point set $P$.
	Observe that a solution to this query problem directly leads to a solution to the colored version of this problem called {\em orthogonal colored range emptiness},
	in which each point in $P$ is colored in one of $C$ different colors, and given a color $c$ and an axis-aligned rectangle, the query asks whether the query range contains at least one point colored in $c$.
	The reduction works as follows: 
	For each color $1 \leq c\leq C$, let $P_c$ denote the subset of $P$ containing all points colored in $c$.
	If we construct an orthogonal range emptiness structure over $P_c$ for each color $c$, then we can answer an orthogonal colored range emptiness query by querying the structure constructed over the points with the query color. 
	The following lemma thus directly follows from the work of Chan et al.~\cite{chan2011orthogonal} on range emptiness:
	
\begin{lemma}[\cite{chan2011orthogonal}]
	\label{lemma_range_emptiness}
	Given $n$ colored points in 2-dimensional rank space, there is a data structure of $nf(n)$ words that answers 2D orthogonal colored range emptiness queries in $g(n)$ time, where 
	\begin{itemize}
		\item a) $f(n)=O(1)$ and $g(n)=O(\lg^{\epsilon} n)$ for any constant $\epsilon>0$; or
		\item b) $f(n)=O(\lg \lg n)$ and $g(n)=O(\lg \lg n)$.
	\end{itemize}
\end{lemma}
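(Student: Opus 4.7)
The plan is to execute the reduction already sketched in the paragraph preceding the lemma and verify that the two stated bounds fall out. Partition $P$ into color classes $P_1,\dots,P_C$ with $|P_c|=n_c$ and $\sum_c n_c=n$, and build on each $P_c$ the uncolored 2D orthogonal range emptiness structure of Chan et al.~\cite{chan2011orthogonal}. Its two variants use $O(n_c)$ and $O(n_c\lg\lg n_c)$ words respectively, so summing over $c$ gives $O(n)$ and $O(n\lg\lg n)$ words total, matching $nf(n)$ in each case; crucially, it is the \emph{linearity} of Chan et al.'s space bound in the number of points that lets us partition without paying for the number of colors.

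One technicality is that the Chan et al.\ structure built on $P_c$ operates in the rank space of $P_c$, whereas the query rectangle is given in the rank space of the whole set $P$. To bridge this, for each color $c$ I would store a y-fast trie on the $x$-coordinates and another on the $y$-coordinates of $P_c$; each supports predecessor in $O(\lg\lg n)$ time on the universe $\{1,\dots,n\}$ and uses $O(n_c)$ words, contributing $O(n)$ words in total, which is absorbed into $O(nf(n))$ since $f(n)=\Omega(1)$.

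To answer a query with color $c$ and rectangle $[x_1,x_2]\times[y_1,y_2]$, first perform four predecessor lookups on the tries for $P_c$ to translate the four coordinates into the local rank space of $P_c$, then run one emptiness query on the structure built over $P_c$. The total time is $O(\lg\lg n+g(n))=O(g(n))$, because $\lg\lg n=O(\lg^\epsilon n)$ in case (a) and $\lg\lg n=g(n)$ in case (b). The argument is mostly bookkeeping, and I do not anticipate any real obstacle; the one point that requires care is ensuring that the rank-translation cost stays within the target query time, which is precisely why y-fast tries on the universe $\{1,\dots,n\}$ are a clean choice here.
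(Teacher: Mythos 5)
Your proposal is correct and follows the same route as the paper: partition $P$ by color, build one uncolored range-emptiness structure of Chan et al.\ per color class, and route each colored query to the structure for its query color. The paper states the reduction in the paragraph immediately before the lemma and treats the result as an immediate consequence, without spelling out the rank-space coordination you address; your observation that each per-color structure lives in the rank space of $P_c$ rather than of $P$, and that a $y$-fast trie on the universe $\{1,\dots,n\}$ gives $O(\lg\lg n)$-time translation in $O(n_c)$ words, is a correct and genuinely useful piece of bookkeeping that the paper leaves implicit. The budget checks all go through: $\lg\lg n = O(\lg^\epsilon n)$ in case (a), and $\lg\lg n$ matches $g(n)$ in case (b), while the trie space is absorbed into $O(n f(n))$ since $f(n)=\Omega(1)$.
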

	
	\subparagraph*{Orthogonal Stabbing Queries over 3D Canonical Boxes.}
	In the {\em 3D stabbing counting}  problem, we preprocess a set of 3D boxes, such that, given a query point $q$, we can compute the number of boxes containing $q$ efficiently, while in the {\em 3D stabbing reporting} query problem, we report these boxes. 
	In both our solution and the solution of Kaplan et al.~\cite{kaplan2008efficient}, we use data structures for this problem in a special case in which each box is a canonical box.

	Here we introduce the data structure of Kaplan et al.~\cite{kaplan2008efficient} that solves the stabbing query problems over a set of $n$ canonical boxes in 3D, as our first solution to the colored range counting problem augments this data structure and uses it as a component.
	Their data structure consists of two layers of segment trees.
	The structure at the top layer is a segment tree constructed over the $z$-coordinates of the boxes.
	More precisely, we project each box onto the $z$-axis to obtain an interval, and the segment tree is constructed over all these intervals.
	A box is assigned to a node in this tree if its corresponding interval on the $z$-axis is {\em associated} with this node. 
	(Recall that as a segment tree, the leaves correspond to the elementary intervals induced by the endpoints of the intervals projected by each box onto the $z$-axis; and each internal node $v$ of the tree corresponds to an interval $I_z(v)$ that are the union of elementary intervals of the leaves in the subtree rooted at $v$.
	A box is assigned to a node $v$ in this tree if its corresponding interval on the $z$-axis covers the interval $I_z(v)$ but does not cover the interval $I_z(u)$, where $u$ is the parent node of $v$.)
	For each node $v$ in the top-layer segment tree, we further construct a segment tree over the projections of the boxes assigned to $v$ on the $y$-axis, and the segment trees constructed for all the nodes of the top-layer tree form the bottom-layer structure. 
	Finally, for each node, $v'$, of a segment tree at the bottom layer, we construct a list, $sList(v')$, of the boxes assigned to $v'$, sorted by their $x_1$-coordinates, i.e. the left endpoints of their projections on the $x$-axis; recall that the right endpoint is always $+\infty$ by the definition of canonical boxes. 
	The lists of boxes associated with the nodes in the same bottom-level segment tree are linked together in a fractional cascading \cite{chazelle1986fractional} data structure to facilitate the search among $x_1$-coordinates of the boxes in each list. 
	
	
	With these data structures, the query algorithm first performs a search in the top-layer segment tree to locate the $O(\lg n)$ nodes whose associated intervals (which are projections of boxes on the $z$-axis) contain the $z$-coordinate of the query point $q$.
	For each of these nodes, we then query the bottom-layer segment tree constructed for it to find the boxes whose projections on the $yz$-plane contain $(q.y, q.z)$. 
	The searches with this layer of structures end at $O(\lg^2 n)$ nodes of the bottom-layer trees, whose $sList$'s are disjoint.
	Finally, a binary search at the $sList$ of each of these nodes, sped up by factional cascading, gives us the answer. Thus:

	%
	\begin{lemma}[\cite{kaplan2008efficient}]
		\label{lemma:stabbing_4}
		Given a set of $n$ canonical boxes in three dimension, the above data structure occupies $O(n\lg^2 n)$ words and answers stabbing counting queries in $O(\lg^2 n)$ time and stabbing reporting queries in $O(\lg^2 n+k)$ time, where $k$ denotes the number of boxes reported. 
		Furthermore, the output of the reporting query is the union of $O(\lg^2 n)$ disjoint subsets, each containing the boxes stored in a nonempty prefix of a  bottom-layer sorted list. 
		The preprocessing time is $O(n\lg^2 n)$.
	\end{lemma}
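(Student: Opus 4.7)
The plan is to verify the claims layer by layer, starting with space. Each canonical box, viewed through its $z$-projection, is assigned by the standard segment-tree property to $O(\lg n)$ canonical nodes of the top-layer tree. At every such top-layer node $v$, the box participates in the bottom-layer segment tree built for $v$ and is again assigned to $O(\lg n)$ canonical nodes. Hence each box occupies $O(\lg^2 n)$ entries across all lists $sList(\cdot)$, giving $O(n\lg^2 n)$ total; the fractional cascading structure built on the lists of one bottom-layer tree adds only a constant-factor overhead.

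Next I would check correctness. Because every box is canonical, i.e.\ of the form $[x_1,+\infty)\times[y_1,y_2)\times[z_1,z_2)$, a box stabs $q=(q.x,q.y,q.z)$ iff its $z$-interval contains $q.z$, its $y$-interval contains $q.y$, and $x_1 \le q.x$. The first condition partitions the boxes according to the $O(\lg n)$ canonical nodes on the top-layer search path for $q.z$; fixing one such node, the second condition further partitions the boxes into the $O(\lg n)$ canonical nodes on the bottom-layer search path for $q.y$. Thus the set of boxes satisfying both is exactly the disjoint union of $sList(v')$ over $O(\lg^2 n)$ bottom-layer canonical nodes $v'$. Since each $sList(v')$ is sorted by $x_1$, the boxes further satisfying $x_1\le q.x$ form a \emph{prefix} of $sList(v')$, which proves the decomposition claim used in the reporting statement.

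The main obstacle, and the step I expect to spend the most care on, is the query time: a naive binary search on each of the $O(\lg^2 n)$ bottom-layer prefixes would cost $O(\lg^3 n)$. To avoid this I would invoke the Chazelle--Guibas fractional cascading scheme on the sibling lists inside each bottom-layer tree. After one $O(\lg n)$ binary search for $q.x$ in the root list of a bottom-layer tree, every descent to a child list costs only $O(1)$ to update the pointer, so all $O(\lg n)$ prefix positions inside one bottom-layer tree are located in $O(\lg n)$ total time. Summing over the $O(\lg n)$ top-layer nodes visited yields $O(\lg^2 n)$ for counting, and an additional $O(k)$ to dump the prefixes yields $O(\lg^2 n + k)$ for reporting.

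Finally, for preprocessing I would first sort all coordinate values once in $O(n\lg n)$ time and build the top-layer segment tree; then, for every top-layer node $v$, I would construct its bottom-layer segment tree, the $x_1$-sorted lists, and the fractional cascading structure in time linear in the number of boxes assigned to $v$. Summed over the top-layer, this is proportional to the total list size, which matches the $O(n\lg^2 n)$ space bound and therefore gives the claimed $O(n\lg^2 n)$ preprocessing time.
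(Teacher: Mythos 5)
Your proposal is correct and follows essentially the same route as the paper's description of the Kaplan et al.\ data structure: two layers of segment trees (over $z$ then $y$) with $x_1$-sorted lists and fractional cascading, so that the stabbing set decomposes into $O(\lg^2 n)$ disjoint prefixes and all prefix positions are found in $O(\lg^2 n)$ total time. One small imprecision: the fractional cascading is threaded along parent--child lists on the bottom-layer search path (a root-to-leaf path for the point query $q.y$), not among sibling lists, but this does not affect the argument or the bounds.
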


	\subparagraph*{Reducing 2D 3-Sided Colored Range Counting to 3D Orthogonal Stabbing Counting over Canonical Boxes.}
	\label{sect: color_counting_reduction}
A key technique used in both the solutions of  Kaplan et al.~\cite{kaplan2008efficient} and our solutions is a reduction from 2D 3-sided colored range counting queries, in which each query range is of the form  $[a, b] \times [c, +\infty)$ for some $a, b, c \in {\rm I\!R}$, to 3D orthogonal stabbing queries over canonical boxes\footnote{Later we sometimes deal with query ranges of the form $[a, b] \times (\infty,d]$ for some $a, b, d \in {\rm I\!R}$, and a similar reduction also works.}.
	The reduction is performed in two steps.
	First we reduce the 2D 3-sided colored range counting query problem to the 3D dominance colored range counting problem, in which we preprocess a set, $P$, of colored points in  ${\rm I\!R}^3$, such that given a query point $q$ in ${\rm I\!R}^3$, one can report the number of distinct colors in $P\cap (-\infty, q.x]\times(-\infty, q.y]\times(-\infty, q.z]$ efficiently.
	This reduction works as follows: For each point, $p=(p.x, p.y)$, we create a point, $p'=(-p.x, p.x, -p.y)$, in  ${\rm I\!R}^3$, and assign it with the color of $p$.
	Then a 2D 3-sided colored range counting query over the original points in which the query range is $[a, b] \times [c, +\infty)$ can be answered by performing a 3D dominance colored range query over the created points, using $(-\infty, -a]\times(-\infty, b]\times(-\infty, -c]$ as the query range. 
	
	To further reduce the 3D dominance colored range counting problem to 3D orthogonal stabbing counting over canonical boxes, we need some additional notation:
	Given a point $p$ in 3D, let $Q^{+}_p$ denote region $[p.x, +\infty)\times[p.y, +\infty)\times[p.z, +\infty)$.
	Furthermore, given a point set $A$, let $U(A)$ denote the region of $\cup_{p\in A} Q^{+}_p$.
	Then the following lemma is crucial:
	
	\begin{lemma}[\cite{kaplan2008efficient}]
		\label{lemma_box}
		Given a set, $A$, of $n$ points in three-dimensional space, a set of $O(n)$ pairwise disjoint 3D canonical boxes can be computed in $O(n\lg^2 n)$ time such that the union of these boxes is the region $U(A)$. 
	\end{lemma}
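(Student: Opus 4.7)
The plan is to reduce the claim to a two-dimensional staircase decomposition in the $yz$-plane. For $(y,z)\in {\rm I\!R}^2$, define $\phi(y,z) = \min\{p.x : p\in A,\ p.y\le y,\ p.z\le z\}$, with the convention $\min\emptyset=+\infty$. A triple $(x,y,z)$ lies in $U(A)$ iff $\phi(y,z)<+\infty$ and $x\ge\phi(y,z)$, so it suffices to partition the shadow $\{(y,z):\phi(y,z)<+\infty\}$ into pairwise disjoint half-open rectangles $[y_1,y_2)\times[z_1,z_2)$ on which $\phi$ takes a constant value $x_1$; each such rectangle lifts to a canonical box $[x_1,+\infty)\times[y_1,y_2)\times[z_1,z_2)$, and the disjointness lifts as well.

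To compute this partition I would sweep a plane upward along the $z$-axis and maintain the $xy$-Pareto frontier $F(z)$ of the already-seen points, storing it in a balanced BST keyed by $y$. Along the tree the points of $F(z)$ appear with $y$ strictly increasing and $x$ strictly decreasing. For each $q\in F(z)$ let $q'$ be its $y$-successor in the tree (with $q'.y=+\infty$ if there is none); then the $xy$-cross-section of $U(A)$ at height $z$ decomposes exactly into the semi-infinite strips $[q.x,+\infty)\times[q.y,q'.y)$, one per frontier point. Each such strip is kept ``open'' throughout the maximal interval of $z$ during which it remains on the staircase and, upon closing, yields a canonical box using that $z$-interval.

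When a new point $p$ is processed at $z=p.z$, let $q_j$ be its $y$-predecessor in $F(z)$. If $q_j.x<p.x$ then $q_j$ already $xy$-dominates $p$ and the frontier is unchanged; otherwise $p$ is essential, and I (i) close $q_j$'s current strip and reopen a shortened version with $y$-range $[q_j.y,p.y)$, (ii) successively delete the neighbors $q_{j+1},\dots,q_{j'}$ whose $x$-coordinate is at least $p.x$, emitting their boxes as they leave, and (iii) insert $p$ with a fresh strip $[p.y,q_{j'+1}.y)$. The event produces $O(1+k_p)$ boxes, where $k_p$ is the number of displaced points; since a point can be displaced at most once over the whole sweep, $\sum_p k_p\le n$, so the sweep emits $O(n)$ boxes in total, together with $O(n)$ more obtained by flushing the frontier at the end. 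Pairwise disjointness is immediate: two emitted boxes either coexist at a common sweep instant, in which case their $y$-ranges are distinct intervals of the current partition, or their $z$-intervals are disjoint, because a strip is opened only once its predecessor has already closed.

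The running cost is dominated by an initial sort by $z$ plus $O(\lg n)$ per BST update; since the total number of tree operations equals the number of events plus the number of emitted boxes, both $O(n)$, the algorithm runs in $O(n\lg n)$ time, well within the $O(n\lg^2 n)$ budget. The main obstacle I anticipate is the combined amortized-counting and correctness argument: one must charge each ``shrink'' and each ``destroy'' event to a distinct point so that the telescoping count is $O(n)$, and simultaneously verify that the emitted half-open rectangles form an honest partition of the shadow, with neither overlap nor gap as the staircase changes abruptly at each event.
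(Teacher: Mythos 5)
The paper does not prove this lemma; it merely cites Theorem~2.1 of Kaplan et al.~\cite{kaplan2008efficient}, so there is no in-paper argument to compare against. Your $z$-sweep, maintaining the $xy$-Pareto staircase in a balanced search tree, opening a semi-infinite strip per staircase vertex, and emitting a canonical box $[q.x,+\infty)\times[q.y,q'.y)\times[z_1,z_2)$ whenever a strip is destroyed or shrunk (plus a final flush), is a correct self-contained proof and is the natural staircase decomposition consistent with what Kaplan et al.\ do for $d=3$. Two remarks. First, the $O(n\lg n)$ time you obtain is a $\lg n$ factor sharper than the $O(n\lg^2 n)$ quoted in the lemma (the latter is a conservative figure inherited from the general-$d$ treatment), so the stated bound is met with room to spare. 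Second, the obstacle you flag in your last paragraph is already resolved by the argument you give: $\sum_p k_p\le n$ because a deleted staircase vertex never re-enters the frontier; there are at most $n$ shrink events since each is charged to the point whose insertion triggered it; and disjointness and exact coverage follow because at every sweep instant the open strips partition the current cross-section of $U(A)$ (so coexisting boxes have disjoint $y$-ranges and no $y$-gaps), while a closed strip and its reopened or newly inserted neighbor at the same event have half-open $z$-intervals that abut at $p.z$ and hence cannot overlap in $z$.
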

	
	Originally Kaplan et al.~\cite[Theorem 2.1]{kaplan2008efficient} proved the above lemma for $d$-dimensional space where $d \ge 1$; for a general $d$, the number of boxes required to cover $U(A)$ is $O(n^{\lfloor d/2\rfloor})$. To help readers understand this decomposition, Figure~\ref{fig-box} shows an example for $d=2$.
	
	\begin{figure}[h]
		\centering
		{\includegraphics[scale=0.5]{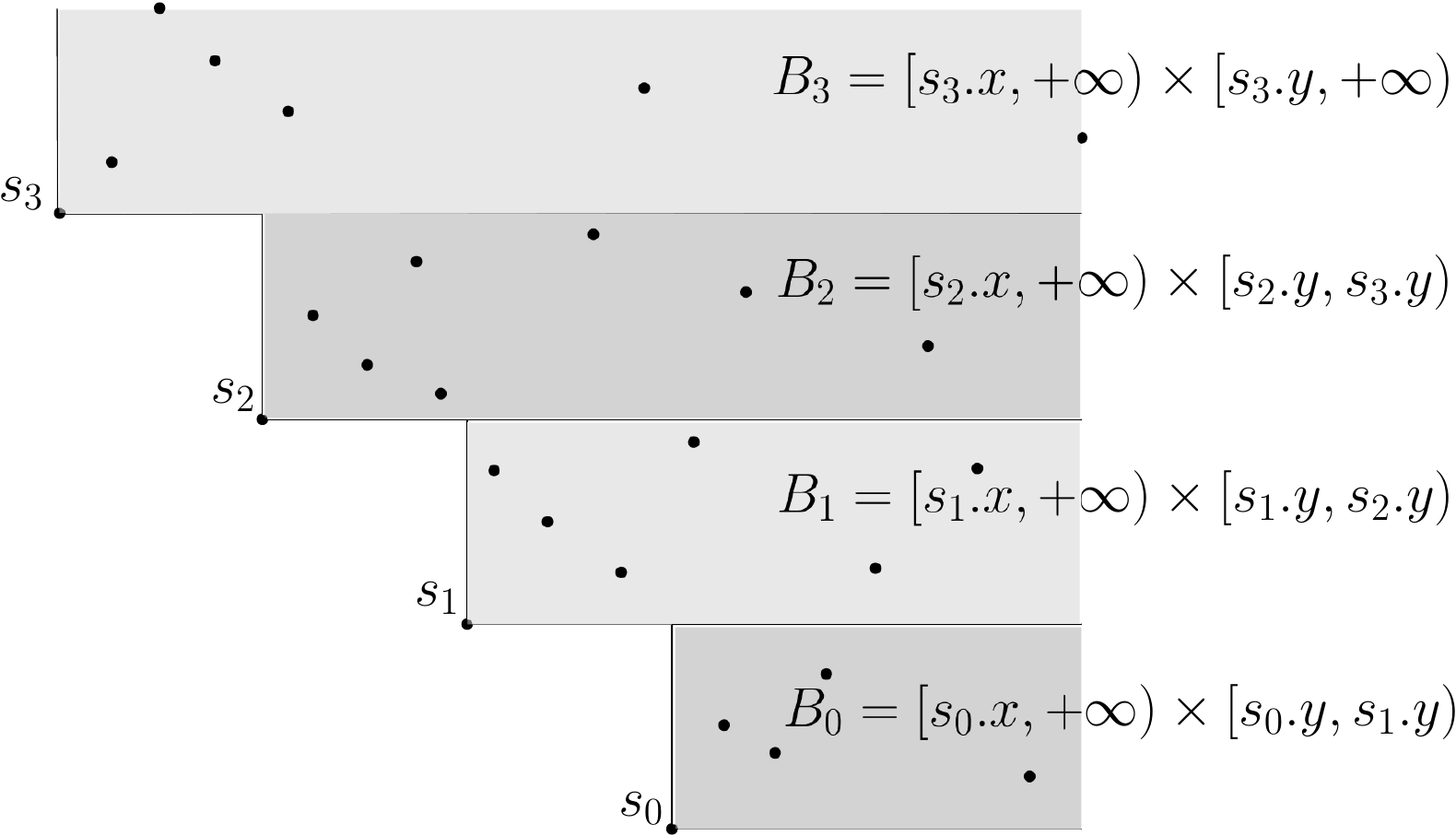}}
		\caption{\label{fig-box}
			An example of partitioning $U(A)$, where $A$ is a set of points on the plane, into $O(|A|)$ canonical boxes. In the figure, $s_0, s_1$, $s_2$, $s_3$ and $s_4$ are four points in $P$, and $B_0$, $B_1$, $B_2$ and $B_3$ are the canonical boxes $U(A)$ is partitioned into. All other points are in the interior of these boxes.}
	\end{figure}
	
	
	With this lemma, the reduction works as follows.
	Let $P$ denote the input colored point set of the 3D dominance colored range counting problem, and let $C$ denote the number of colors.
	Then, for each color $1\le c \le C$, we apply Lemma~\ref{lemma_box} to partition $U(P_c)$, where $P_c$ is the set of all the points in $P$ that are colored $c$, into a set, $B_c$, of $O(|U(P_c)|)$ disjoint 3D canonical boxes.
	We then construct a 3D stabbing counting structure over $B=\cup_{c=1}^C B_c$.
	Note that this data structure is constructed over $O(|P|)$ canonical boxes, as $\sum_{c=1}^C |B_c|=\sum_{c=1}^C O(|P_c|)=O(|P|)$.
	To answer a 3D dominance colored range counting query over $P$, for which the query range is the region dominated by a point $q$, observe that, if $q$ dominates at least one point in $P_c$, then it must be located with $U(P_c)$.
	Then, since $U(P_c)$ is partitioned into the boxes in $B_c$, we conclude that $q$ dominates at least one point in $P_c$ iff $q$ is contained in a box in $B_c$.
	Furthermore, since the boxes in $B_c$ are pairwise disjoint, $q$ is either contained in exactly one box in $B_c$ or outside $U(P_c)$.
	Hence, the number of distinct colors in the region dominated by $q$ is equal to the number of boxes in $B$ that contains $q$,
	which can be computed by performing a stabbing counting query in $B$ using $q$ as the query point. 
	
	\section{A New Framework of Achieving Time-Space Tradeoffs}
	\label{sect: almost_linear}
        We present three new solutions to 2D orthogonal colored range counting in this section and Section~\ref{sect:two_more}.
	They follow the same framework, of which we give an overview in Section~\ref{sec:framework}.
	One key component of this framework is a novel scheme of computing the sizes of the intersections between the color sets assigned to different subsets of points that lie within the query range; 
	Section~\ref{sect: new_tech} describes this scheme and shows how to combine it with Lemma~\ref{lemma:stabbing_4} to immediately achieve a new time-space tradeoff for 2D orthogonal colored range counting.
	%
	
	\subsection{Overview of the Data Structure Framework}
	\label{sec:framework}
	Let $P$ denote a set of $n$ points on the plane, each assigned a color identified by an integer in $[1, C]$.
	To support orthogonal colored range counting over $P$, we construct a binary range tree $T$ over the $y$-coordinates of the points in $P$ such that each leaf of $T$ stores a point of $P$, and, from left to right, the points stored in the leaves are increasingly sorted by $y$-coordinate.
	For each internal node $v$ of $T$, we construct the following data structures:
	\begin{itemize}
		\item A list $P(v)$ containing the points stored at the leaf descendants of $v$, sorted by $x$-coordinate; 
		\item A list $P_y(v)$ containing the sorted list of $y$-coordinates of the points in $P(v)$;
		\item 2D 3-sided colored range counting structures, $S(v_l)$ and, $S(v_r)$, constructed over $P(v_l)$ and $P(v_r)$ respectively, where $v_l$ and $v_r$ are respectively the left and right children of $v$ ($S(v_l)$ requires query ranges to open at the top while $S(v_r)$ requires them to open at the bottom; the exact data structures used will be selected later for different tradeoffs);
		\item 2D orthogonal colored range emptiness query structures, $E(v_l)$, and, $E(v_r)$, constructed over $\hat{P}(v_l)$ and $\hat{P}(v_r)$, respectively (using Lemma~\ref{lemma_range_emptiness}), where $\hat{P}(v_l)$ and $\hat{P}(v_r)$ are the point set in rank space converted from ${P}(v_l)$ and ${P}(v_r)$.
	\end{itemize}
	
	Let $Q=[a,b]\times[c,d]$ be the query rectangle. Given an internal node $v$, let $C_Q(v)$ denote the set of distinct colors assigned to points in $P(v) \cap Q$. 
	The query algorithm first locates the lowest common ancestor $u$ of the $c$-th and $d$-th leaves of $T$.
	As all the points from $P$ that are in the query range $Q$ must be in $P(u)$, $|C_Q(u)|$ is the answer to the query. 
	To compute  $|C_Q(u)|$, let $u_l$ and $u_r$ denote the left and right children of $u$, respectively.
	By the exclusion-inclusion principle, we know that $|C_Q(u)|= |C_Q(u_l)|+|C_Q(u_r)|- |C_Q(u_l)\cap C_Q(u_r)|$.
	Among the terms on the right hand side of this equation, $|C_Q(u_l)|$ and $|C_Q(u_r)|$ can be computed by performing 2D 3-sided colored range counting queries over $S(v_l)$ and $S(v_r)$, using $[a,b]\times[c,+\infty)$ and $[a,b]\times(-\infty, d]$ as query ranges, respectively. 
	What remains is to compute $|C_Q(u_l)\cap C_Q(u_r)|$.
	
	This idea of decomposing a 4-sided query range into two 3-sided query ranges has been used before for both 2D orthogonal colored range reporting \cite{gupta2018computational} and counting~\cite{kaplan2008efficient}.
	Furthermore, to support 2D 3-sided colored range counting, we apply the reduction of Kaplan et al.~\cite{kaplan2008efficient} (summarized in Section~\ref{sec:prelim} of our paper) to reduce it to the 3D stabbing query problems over canonical boxes.
	Thus, the techniques summarized so far have been used in previous work (without the construction of range emptiness structures).
	What is new is our scheme of achieving time-space tradeoffs when computing $|C_Q(u_l)\cap C_Q(u_r)|$;
	it gives us more {\em flexibility} in the design of 3D stabbing query structures, thus allowing us to achieve new results. 
	
	Here we describe the conditions that a 3D stabbing query structure must meet so that we can combine it with our scheme of computing $|C_Q(u_l)\cap C_Q(u_r)|$, while deferring the details of the latter to Section~\ref{sect: new_tech}.
	The stabbing query structure consists of multiple layers of trees of some kind. 
	The top-layer tree is constructed over the entire set of canonical boxes, and each of its nodes is assigned a subset of boxes.
	The second layer consists of a set of trees, each constructed over the boxes assigned to a node in the top-layer tree, and so on. 
	Thus each bottom-layer tree node is also assigned a list of boxes in a certain order (e.g., the $sList$ in the data structure for Lemma~\ref{lemma:stabbing_4}).
	The query algorithm locates a set $S$ of bottom-layer tree nodes. 
	For each node $v \in S$, there exists a nonempty prefix of the box list assigned to $v$, such that such prefixes over all the nodes in $S$ form a partition of the set of boxes containing the query point $q$.
	Furthermore, the size of each such prefix can be computed efficiently, and each box in such a prefix can also be reported efficiently.
	
	Clearly Lemma~\ref{lemma:stabbing_4} satisfies these conditions and can be used in our framework.
	On the contrary, even though Kaplan et al. proved this lemma and used it successfully in their $O(n^2\lg^2 n)$ space solution, they can not directly use it with their scheme of achieving time-space tradeoffs.
	Instead, they expand this structure with a third layer which is a segment tree constructed over $x$-coordinates of the boxes, increasing both time and space costs.
	Recall that in their scheme the set of boxes containing the query point are also decomposed into a fixed number of subsets.
	The reason Kaplan et al. cannot directly apply Lemma~\ref{lemma:stabbing_4} is that their scheme requires each of the  
	decomposed subsets to be equal to the entire set of boxes stored in a bottom-layer tree node, while our scheme allows each subset to be part of such a box set. 
	Hence, this extra flexibility allows us to use Lemma~\ref{lemma:stabbing_4} and alternative 3D stabbing query structures to be designed later (in Section \ref{sect:two_more}) in our framework.

	\subsection{Computing Intersections between Color Sets}
	\label{sect: new_tech}
	
	We now introduce our new scheme of computing $|C_Q(u_l)\cap C_Q(u_r)|$, and combine it with the stabbing query structure from Lemma~\ref{lemma:stabbing_4} to achieve a new time-space tradeoff for 2D orthogonal colored range counting.
	Since our scheme works with some other stabbing query structures, we describe it assuming a stabbing query structure satisfying the conditions described in Section~\ref{sec:framework} is used.
	To understand this scheme more easily, it may be advisable for readers to think about how it applies to the stabbing query structure of Lemma~\ref{lemma:stabbing_4}.
	
	
	Recall that, the problems of computing $|C_Q(u_l)|$ and $|C_Q(u_r)|$ have each been reduced to a 3D stabbing query.
	Furthermore, for each stabbing query, all reported boxes are distributed into a number of disjoint sets; the boxes in each such set form a nonempty prefix of the box list assigned to a node of the bottom-layer tree (henceforth we call each such box list a {\em bottom list} for convenience).
	Then, for the stabbing query performed to compute $|C_Q(u_l)|$, we define $D_Q$ to be a set in which each element is a such a disjoint set, and all these disjoint sets (whose union form the set of reported boxes) are elements of $D_Q$. 
	$U_Q$ is defined in a similar way for $|C_Q(u_r)|$. 
	Thus, if we use the data structure for Lemma~\ref{lemma:stabbing_4} to answer these stabbing queries, both $|D_Q|$ and $|U_Q|$ will be upper bounded by $O(\lg^2 n)$.
	As shown in Section~\ref{sec:prelim}, when reducing 2D 3-sided colored counting to 3D stabbing queries over canonical boxes, we guarantee that, each canonical box is part of the region $U(P_c)$ for each color $c$;
	we call this color the color of this canonical box, and explicitly store with each box its color. 
	Furthermore, for each color $1\leq c \leq C$, at most one canonical box colored in $c$ contains the query point, which implies that each box in $\cup_{s\in D_Q}s$ (resp. $\cup_{t\in U_Q}t$) has a distinct color.
	For each set $s\in D_Q$ and $t\in U_Q$, let $C(s)$ and $C(t)$ denote the set of colors associated with the boxes in $s$ and $t$, respectively.
	Then we have $|C_Q(u_l)| = \sum_{s\in D_Q} |C(s)|$, $|C_Q(u_r)|=\sum_{t\in U_Q} |C(t)|$, and $|C_Q(u_l)\cap C_Q(u_r)|=\sum_{s\in D_Q, t\in U_Q} |C(s)\cap C(t)|$.

	
	It now remains to show how to compute $\sum_{s\in D_Q, t\in U_Q} |C(s)\cap C(t)|$, for which more preprocessing is required.
	For each node $v$ in the binary range tree $T$, we construct a matrix $M(v)$ as follows: 
	Let $X \in [1,n]$ be a parameter to be chosen later.
	If the length, $m$, of a bottom list in the stabbing query structure $S(v_l)$ or $S(v_r)$ is greater than $X$, we divide the list into $\lceil m/X \rceil$ blocks, such that each block, with the possible exception of the last block, is of length $X$.
	If $m \le X$, then the entire list is a single block.
	If a block is of length $X$, we call it a {\em full block}. 
	Let $b_l(v)$ and $b_r(v)$ denote the total numbers of full blocks over all bottom lists in $S(v_l)$ and $S(v_r)$, respectively.
	Then $M(v)$ is a $b_l(v) \times b_r(v)$ matrix, in which each row (or column) corresponds to a nonempty prefix of a bottom list in $S(v_l)$ (or $S(v_r)$) that ends with the last entry of a full block,  
	and each entry $M[i,j]$ stores the number of colors that exist in both the set of colors assigned to the boxes in the prefix corresponding to row $i$ and the set of colors assigned to the boxes in the prefix corresponding to column $j$. 
	To bound the size of $M(v)$, we define the {\em duplication factor}, $\delta(n)$, of a stabbing query structure that satisfies the conditions in Section~\ref{sec:framework} to be the maximum number of bottom lists that any canonical box can be contained in. 
	For example, the duplication factor of the structure for Lemma~\ref{lemma:stabbing_4} is $O(\lg^2 n)$.
	(To see this, observe that in a segment tree constructed over $n$ intervals, each interval may be stored in $O(\lg n)$ tree nodes.
	Since segment trees are used in both layers of the data structure for Lemma~\ref{lemma:stabbing_4}, 
	each box can be stored in $O(\lg^2 n)$ different bottom lists.) 
	Since each full block contains $X$ boxes, the numbers of full blocks in the bottom lists of $S(v_l)$ and $S(v_r)$ are at most $\delta(n) |P(v_l)|/X$ and $\delta(n) |P(v_r)|/X$, respectively.
	Therefore, $M(v)$ occupies  $O(\delta(n)^2 |P(v_l)||P(v_r)|/X^2) = O((\delta(n) |P(v)|/X)^2)$ words. 
	
	With these matrices, the computation of $\sum_{s\in D_Q, t\in U_Q} |C(s)\cap C(t)|$ can proceed as follows. 
	Since each set $s \in D_Q$ (resp. $t\in U_Q$) occupies a prefix of a bottom list, $s$ (resp. $t$) can be split into two parts: $s_h$ (resp. $t_h$) which is the (possibly empty) prefix of $s$ (resp. $t$) that consists of all the full blocks  entirely contained in $s$ (resp. $t$), and $s_l$ (resp. $t_l$) which contains the remaining entries of $s$ (resp. $t$). 
	Thus we have $C(s_h) \cup C(s_l) = C(s)$ and $C(t_h)\cup C(t_l) = C(t)$.
	Since no two boxes in $s$ have the same color and the same applies to the boxes in $t$, $C(s_h) \cap C(s_l) = C(t_h)\cap C(t_l) = \emptyset$ also holds. Thus, we have

	\begin{equation} \label{eq_s_t_intersect}
	\begin{split}
	&\sum_{s\in D_Q, t\in U_Q} |C(s)\cap C(t)|  = \sum_{s\in D_Q, t\in U_Q} (|C(s_h)\cap C(t_h)|+|C(s_h)\cap C(t_l)| + |C(s_l)\cap C(t)|)\\
	& = \sum_{s\in D_Q, t\in U_Q} |C(s_h)\cap C(t_h)| +   \sum_{s\in D_Q, t\in U_Q} |C(s_h)\cap C(t_l)|  + |(\cup_{s\in D_Q} C(s_l))\cap \cup_{t\in U_Q} C(t)|
	\end{split}
	\end{equation}
	For the first term in the last line of Equation \ref{eq_s_t_intersect}, we can retrieve $|C(s_h)\cap C(t_h)|$ from the matrix $M(v)$ for each possible pair of $s_h$ and $t_h$ and sum them up. 
	Therefore, the first term can be computed in $O(|D_Q|\cdot |U_Q|)$ time. 
	For the third term, observe that, $\cup_{t\in U_Q} C(t) = C_Q(u_r)$. 
	Therefore, the third term can be computed by performing, for each color $c \in \cup_{s\in D_Q} C(s_l)$, a 2D orthogonal colored range emptiness query over $P(u_r)$ with $c$ as the query color and $Q$ as the query range.
	Note that the range emptiness query data structure $E(v_r)$ defined in Section~\ref{sec:framework} is built upon the points $\hat{P}(v_r)$ in rank space.
	We need to reduce $Q$ into rank space with respect to $\hat{P}(v_r)$ before performing colored range emptiness queries.
	Since all these queries share the same query range, we need only convert $Q$ into rank space once.
	This can be done by performing binary searches in $P(v)$ and $P_y(v)$ in $O(\lg n)$ time.
	As $|\cup_{s\in D_Q} C(s_l)| = |\cup_{s\in D_Q} s_l| = O(|D_Q|\cdot X)$, it requires $O(X |D_Q|\cdot (g(n)+\tau(n))+\lg n)$ time to compute these colors and then answer all these queries, where $g(n)$ denotes the query time of each range emptiness query in Lemma~\ref{lemma_range_emptiness} and $\tau(n)$ denotes the query time of reporting a box and its color in the query range.

	Finally, to compute the second term in the last line of Equation \ref{eq_s_t_intersect}, observe that,
	\begin{equation} \label{eq:h_l}
	\sum_{s\in D_Q, t\in U_Q} |C(s_h)\cap C(t_l)| = |(\cup_{s\in D_Q} C(s))\cap (\cup_{t\in U_Q} C(t_l))| - |(\cup_{s\in D_Q} C(s_l)) \cap (\cup_{t\in U_Q} C(t_l))|
	\end{equation}
	The first term of the right hand side of Equation \ref{eq:h_l} can be computed in $O(X |U_Q|\cdot (g(n)+\tau(n))+\lg n)$ time, again by performing range emptiness queries, but this time we use $E(v_l)$. 
	The second term can be computed by retrieving and sorting the colors in $\cup_{s\in D_Q} C(s_l)$ and those in $\cup_{t\in U_Q} C(t_l)$, and then scanning both sorted lists to compute their intersection.
	Since $|(\cup_{s\in D_Q} C(s_l))|$ (resp. $|(\cup_{t\in U_Q} C(t_l))|$) are bounded by $O(X|D_Q|)$ (resp. $O(X|U_Q|)$),
	the two sets of colors can be retrieved in $O(X|D_Q|\tau(n))$ and $O(X|U_Q|\tau(n))$ time and then sorted using Han's sorting algorithm \cite{han2002deterministic} in $O(X|D_Q|\lg\lg n)$ and $O(X|U_Q|\lg\lg n)$ time.
	Thus the second term in the last line of  Equation \ref{eq_s_t_intersect} can be computed in  $O(X(|U_Q|+|D_Q|)(\lg \lg n+\tau(n)))$ time.
	Overall, computing $|C_Q(u_l)\cap C_Q(u_r)|$ requires $O(|D_Q|\cdot |U_Q|+X(|U_Q|+|D_Q|)(\lg \lg n + g(n)+\tau(n))+\lg n)$ time. 
	Lemma~\ref{lemma:new_technique} summarizes the complexities of our framework. 
	
	\begin{lemma}
		\label{lemma:new_technique}
		Suppose that the 3D stabbing query structure of $S(v_l)$ (or $S(v_r)$) for each node $v \in T$ has duplication factor $\delta(n)$, occupies $O(|P(v)| h(n))$ words, and, given a query point $q$, it can compute  $\phi(n)$ disjoint sets of boxes whose union is the set of boxes containing $q$ in $O(\phi(n))$ time. 
		Furthermore, each subset is a nonempty prefix of a bottom list, and after this prefix is located, its length can be computed in $O(1)$ time and each box in it can be reported in $O(\tau(n))$ time.
		Let $f(n)$ and $g(n)$ be the functions set in Lemma~\ref{lemma_range_emptiness} to implement $E(v_l)$ and $E(v_r)$.
		Then the structures in our framework occupy $O((n\delta(n) /X)^2+n\lg n(f(n)+h(n)))$ words and answer a 2D orthogonal colored range counting query in $O(\phi^2(n) + X\phi(n)(\lg \lg n + g(n)+\tau(n)) +\lg n)$ time, where $X$ is an integer parameter in $[1, n]$.
	\end{lemma}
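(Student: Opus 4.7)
The plan is to verify separately the space bound and the query time bound by aggregating the per-node costs that were established in the discussion immediately preceding the statement. First, I would settle the space bound. The binary range tree $T$ has $O(\lg n)$ levels, and since each point lies in the subtree of $O(\lg n)$ nodes, $\sum_{v} |P(v)| = O(n \lg n)$. Consequently the auxiliary structures that have linear-in-$|P(v)|$ size contribute $O(|P(v)|)$ words for $P(v)$ and $P_y(v)$, $O(|P(v)| f(n))$ words for $E(v_l)$ and $E(v_r)$ (by Lemma~\ref{lemma_range_emptiness}), and $O(|P(v)| h(n))$ words for the stabbing structures $S(v_l)$ and $S(v_r)$, summing to $O(n \lg n (f(n) + h(n)))$ overall.

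The only quadratic contribution comes from the matrices $M(v)$. I already argued that $M(v)$ has $O((\delta(n) |P(v)|/X)^2)$ entries. To sum over the tree, I would apply the standard calculation for a balanced binary tree: at level $\ell$ there are at most $2^\ell$ internal nodes, each with $|P(v)| = O(n/2^\ell)$, so $\sum_{v \text{ at level } \ell} |P(v)|^2 = O(n^2/2^\ell)$; summing geometrically over the $O(\lg n)$ levels yields $\sum_{v} |P(v)|^2 = O(n^2)$. Multiplying by the $(\delta(n)/X)^2$ factor gives the claimed $O((n \delta(n)/X)^2)$ bound on the total matrix storage, and adding it to the linear-per-node contributions yields the space bound in the statement.

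Next I would turn to the query time. Locating the lowest common ancestor $u$ of the $c$-th and $d$-th leaves takes $O(\lg n)$ time. Computing $|C_Q(u_l)|$ and $|C_Q(u_r)|$ reduces, via the construction of Section~\ref{sec:prelim}, to two stabbing queries over canonical boxes, each of which outputs $\phi(n)$ disjoint prefixes of bottom lists in $O(\phi(n))$ time; summing the sizes of these prefixes, which are each computable in $O(1)$ time by assumption, yields both cardinalities in $O(\phi(n))$ time. Thus $|D_Q|, |U_Q| = O(\phi(n))$, and the remaining work is the evaluation of $|C_Q(u_l) \cap C_Q(u_r)|$ through Equation~(\ref{eq_s_t_intersect}).

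I would then account for the three terms in the last line of Equation~(\ref{eq_s_t_intersect}) exactly as in the preceding discussion, verifying that each falls inside the target bound. The first term, $\sum_{s,t} |C(s_h) \cap C(t_h)|$, costs $O(|D_Q| \cdot |U_Q|) = O(\phi^2(n))$ by one lookup per pair into $M(u)$. The third term is computed by enumerating $\cup_{s \in D_Q} C(s_l)$ in $O(X \phi(n) \tau(n))$ time and launching a range-emptiness query per color against $E(u_r)$, after a one-time $O(\lg n)$ reduction of $Q$ to the rank space of $\hat{P}(u_r)$; this costs $O(X\phi(n)(g(n) + \tau(n)) + \lg n)$. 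The second term is recovered from Equation~(\ref{eq:h_l}): its first piece is handled symmetrically via $E(u_l)$ in $O(X\phi(n)(g(n)+\tau(n))+\lg n)$, while the second piece is produced by retrieving $\cup_{s} C(s_l)$ and $\cup_{t} C(t_l)$ in $O(X \phi(n) \tau(n))$ time, sorting both in $O(X \phi(n) \lg \lg n)$ via Han's algorithm, and scanning. Adding these contributions together yields $O(\phi^2(n) + X \phi(n)(\lg \lg n + g(n) + \tau(n)) + \lg n)$, matching the statement. The only real obstacle is the matrix-storage sum, which is resolved by the geometric-sum argument over levels of $T$; the rest of the proof is a careful accounting of the costs already derived in Section~\ref{sect: new_tech}.
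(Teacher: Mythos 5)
Your proposal is correct and follows essentially the same approach as the paper's proof: the same per-node accounting of space (with the geometric-sum argument giving $\sum_v |P(v)|^2 = O(n^2)$ for the matrices), and the same term-by-term aggregation of the query time from the discussion preceding the lemma, bounding $|D_Q|, |U_Q|$ by $\phi(n)$.
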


\begin{proof}
	Each node $v$ of $T$ stores a pair of lists, $P(v)$ and $P_y(v)$, of points, the colored range emptiness query structures $E(v_l)$ and $E(v_r)$, the stabbing query data structures $S(v_l)$ and $S(v_r)$, and the matrix $M(v)$.
	Among them, both $P(v)$ and $P_y(v)$ use $|P(v)|$ words of space. $E(v_l)$ and $E(v_r)$ use $O(|P(v)|f(n))$ words of space by  Lemma~\ref{lemma_range_emptiness}. $S(v_l)$ and $S(v_r)$ use $O(|P(v)|h(n))$ words of space.
	The matrix $M(v)$ uses $O((\delta(n) |P(v)|/X)^2)$ words of space. 
	Summing the space costs over all internal nodes of the range tree $T$, the overall space cost is at most $\sum_{v\in T} O((\delta(n) |P(v)|/X)^2 + |P(v)|(f(n)+h(n)))$.
	To simplify this expression, first observe that $\sum_{v\in T}|P(v)| = O(n\lg n)$.
	Furthermore, we can calculate $\sum_{v\in T}|P(v)|^2$ as follows:
	At the $i$-th level of $T$, there are $2^i$ nodes, and each node stores a point list of length $n/2^i$.
	Therefore, the sum of the squares of the lengths of the point lists at the $i$th level is $n^2/2^i$.
	Summing up over all levels, we have $\sum_{v\in T} |P(v)|^2 = O(n^2)$. 
	Therefore, the overall space cost simplifies to $O((n\delta(n) /X)^2+n\lg n(f(n)+h(n)))$. 
	
	Given a query range, we use $O(\phi(n))$ time to compute $|C_Q(u_l)|$ and $|C_Q(u_r)|$. 
	As shown before, computing $|C_Q(u_l)\cap C_Q(u_r)|$ can be reduced to computing $\sum_{s\in D_Q, t\in U_Q} |C(s)\cap C(t)|$, which requires $O(|D_Q|\cdot |U_Q|+X(|U_Q|+|D_Q|)(\lg \lg n + g(n)+\tau(n)) + \lg n)$ time.
	Since both $|D_Q|$ and $|U_Q|$ are upper bounded by $\phi(n)$, the overall query time is $O(\phi^2(n) + X\phi(n)(\lg \lg n + g(n)+\tau(n))+\lg n)$. 
\end{proof}
	
	We can now achieve a new time-space tradeoff by using the stabbing queries data structure from Lemma~\ref{lemma:stabbing_4} in our framework.
	To combine Lemmas~\ref{lemma:stabbing_4} and \ref{lemma:new_technique}, observe that $h(n) = O(\lg^2 n)$, $\phi(n) = O(\lg^2 n)$ and $\tau(n) = O(1)$.
	As discussed before, $\delta(n) = O(\lg^2 n)$.
	We use part b) of Lemma~\ref{lemma_range_emptiness} to implement $E(v_l)$ and $E(v_r)$, so $f(n) = O(\lg\lg n)$ and $g(n) = O(\lg\lg n)$.
	Hence: 
	
	\begin{theorem}
		\label{theorem: color_counting_0}
		Given $n$ colored points on the plane, there is a data structure of $O((\frac{n}{X})^2 \lg^4 n+ n\lg^3 n)$ words of space that answers colored orthogonal range counting queries in $O(\lg^4 n+X\lg^{2} n \lg \lg n)$ time, where $X$ is an integer parameter in $[1, n]$.
		In particular, setting $X=\sqrt{n\lg n}$ yields an $O(n\lg^3 n)$-word structure with $O(\sqrt{n}\lg^{5/2} n \cdot \lg \lg n)$ query time.
	\end{theorem}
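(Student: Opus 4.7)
The plan is to simply instantiate the framework of Lemma~\ref{lemma:new_technique} with the 3D stabbing query structure of Lemma~\ref{lemma:stabbing_4} (for $S(v_l)$ and $S(v_r)$) and with part b) of Lemma~\ref{lemma_range_emptiness} (for $E(v_l)$ and $E(v_r)$), and then do the arithmetic. The only genuine content is to verify that the structure of Lemma~\ref{lemma:stabbing_4} satisfies the interface required by our framework and to bound its duplication factor.

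First I would check the interface. Lemma~\ref{lemma:stabbing_4} states that the set of boxes containing a query point is decomposed into $O(\lg^2 n)$ disjoint subsets, each of which is a nonempty prefix of a bottom list ($sList$). The searches that produce these prefixes use a two-level segment tree plus fractional cascading; once the corresponding bottom-list node is located, the endpoint of the prefix is pinned by a successor search on the $x_1$-coordinates, so its length is known in $O(1)$ time and any box in it, together with its stored color, can be reported in $O(1)$ time. Hence in the notation of Lemma~\ref{lemma:new_technique} we have $\phi(n)=O(\lg^2 n)$, $\tau(n)=O(1)$ and $h(n)=O(\lg^2 n)$.

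Next I would bound the duplication factor $\delta(n)$. In a segment tree built over $n$ intervals, any one interval is stored in $O(\lg n)$ nodes. The structure of Lemma~\ref{lemma:stabbing_4} uses one segment tree on the $z$-projections and, for each of its nodes, a second segment tree on the $y$-projections of boxes assigned to that node, so a box can appear in the $sList$ of at most $O(\lg n)\cdot O(\lg n) = O(\lg^2 n)$ bottom-layer nodes; thus $\delta(n)=O(\lg^2 n)$. For the emptiness component I would choose part b) of Lemma~\ref{lemma_range_emptiness}, giving $f(n)=g(n)=O(\lg\lg n)$.

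Finally I would substitute these values. Plugging into the space bound of Lemma~\ref{lemma:new_technique} gives
\begin{equation*}
O\!\left(\left(\tfrac{n\lg^2 n}{X}\right)^{\!2}+n\lg n\,(\lg\lg n+\lg^2 n)\right)=O\!\left(\left(\tfrac{n}{X}\right)^{\!2}\lg^4 n+n\lg^3 n\right),
\end{equation*}
and into the query-time bound gives
\begin{equation*}
O\!\left(\lg^4 n + X\lg^2 n\,(\lg\lg n+\lg\lg n+1)+\lg n\right)=O\!\left(\lg^4 n + X\lg^2 n\,\lg\lg n\right),
\end{equation*}
matching the theorem. For the stated corollary, I would set $X=\sqrt{n\lg n}$: the first space term becomes $(n/\sqrt{n\lg n})^2\lg^4 n = n\lg^3 n$, matching the second, and the dominant time term becomes $\sqrt{n\lg n}\cdot \lg^2 n\lg\lg n = \sqrt{n}\,\lg^{5/2} n\,\lg\lg n$, which dominates the $\lg^4 n$ term. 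No step requires new machinery; the only mild obstacle is confirming that the two interface conditions (prefix-decomposition with $O(1)$ prefix-length retrieval, and the $O(\lg^2 n)$ duplication factor) really do hold for the structure of Lemma~\ref{lemma:stabbing_4}, both of which follow directly from its description in Section~\ref{sec:prelim}.
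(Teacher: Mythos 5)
Your proposal is correct and follows exactly the same route as the paper: instantiate Lemma~\ref{lemma:new_technique} with the stabbing structure of Lemma~\ref{lemma:stabbing_4} (yielding $h(n)=\phi(n)=\delta(n)=O(\lg^2 n)$, $\tau(n)=O(1)$) and part b) of Lemma~\ref{lemma_range_emptiness} (yielding $f(n)=g(n)=O(\lg\lg n)$), then substitute. The arithmetic, including the $X=\sqrt{n\lg n}$ specialization, checks out.
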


	For the preprocessing, the binary range tree $T$ with all its auxiliary data structures except the matrices $M(v)$'s can be constructed in $O(n\lg^3 n)$ time.
	To construct the matrix $M(v)$, we can adopt the approaching used by Kaplan et al.~\cite{kaplan2008efficient} which computes a matrix over the heavy sets defined in their solution using matrix multiplication, by treating each block in our solution as a heavy set.
	Even though a block in our solution is different from a heavy set in theirs, the preprocessing still works.
	Thus we obtain the preprocessing time achieved by Kaplan et al.~\cite{kaplan2008efficient}.
	For example, when $X=\tilde{O}(\sqrt{n})$, the data structures can be constructed in $\tilde{O}(\frac{n^{(\omega+1)/2}}{X^{(\omega-1)/2}})=\tilde{O}(n^{1.343})$ time, where notation $\tilde{O}$ leaves out polylogarithmic factors and $\omega<2.3727$ denotes the exponent of matrix multiplication \cite{williams2012multiplying}.
	The same preprocessing approach also applies to other tradeoffs under our framework. 
	
	Unlike our result in Theorem~\ref{theorem: color_counting_0}, the solution of Kaplan et al.~\cite{kaplan2008efficient} with $O((\frac{n}{X})^2 \lg^6 n+ n\lg^4 n)$ words of space and $O(X\lg^7 n)$ query time works under the pointer machine model. 
	Nevertheless, with some modifications, our solution can also be made to work under this same model.
	First, Lemma~\ref{lemma_range_emptiness} requires the word RAM model, we can replace it by the optimal solution to the 2D orthogonal range emptiness query problem by Chazelle~\cite{chazelle1986filtering} with $O(n\lg n/\lg \lg n)$ words of space and $O(\lg n)$ query time.
	Thus, $g(n) = O(\lg n)$, but the overall space cost of the data structure remains unchanged.
	Second, when computing $|(\cup_{s\in D_Q} C(s_l)) \cap (\cup_{t\in U_Q} C(t_l))|$, we cannot use Han's sorting algorithm \cite{han2002deterministic} which requires the word RAM.
	Instead, using mergesort, we can compute this value in $O(X(|U_Q|+|D_Q|)\cdot \lg n)$ time.
	Finally, to simulate a matrix $M(v)$, we can use lists indexed by binary search trees, so that we can retrieve each entry in $O(\lg n)$ time. 
	Thus, we achieve the following result:
	\begin{corollary}
		\label{theorem: color_counting_pm}
		Under the arithmetic pointer machine model, given $n$ colored points on the plane, there is a data structure of $O((\frac{n}{X})^2 \lg^4 n+ n\lg^3 n)$ words of space that answers colored orthogonal range counting queries in $O(\lg^5 n+X\lg^{3} n)$ time, where $X$ is an integer parameter in $[1, n]$. In particular, setting $X=\sqrt{n\lg n}$ yields an $O(n\lg^3 n)$-word structure with $O(\sqrt{n}\lg^{7/2} n)$ query time.
	\end{corollary}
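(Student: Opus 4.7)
The plan is to re-run the framework of Lemma~\ref{lemma:new_technique}, but swap out the three word-RAM components that appear in the query and preprocessing pipeline for pointer-machine analogues, and then re-trace the bookkeeping. Concretely, I will keep the range tree $T$, the reduction of the 3-sided counting structures $S(v_l), S(v_r)$ to the 3D canonical-box stabbing structure of Lemma~\ref{lemma:stabbing_4} (which is already a pointer-machine construction, since it uses only segment trees and fractional cascading), and the matrices $M(v)$ of block-prefix color-intersection sizes, including the block parameter $X$.

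The three substitutions I need are as follows. First, I replace the range-emptiness structures $E(v_l),E(v_r)$ (previously obtained from Lemma~\ref{lemma_range_emptiness}) by Chazelle's optimal pointer-machine 2D orthogonal range emptiness structure, which uses $O(n \lg n / \lg \lg n)$ words and answers queries in $O(\lg n)$ time, so $f(n)=O(\lg n / \lg\lg n)$ and $g(n)=O(\lg n)$. Second, when computing $|(\cup_{s\in D_Q}C(s_l))\cap (\cup_{t\in U_Q}C(t_l))|$ I cannot use Han's integer sort, so I sort the two color lists by comparison-based mergesort in $O(X(|D_Q|+|U_Q|)\lg n)$ time, replacing the $\lg\lg n$ factor by $\lg n$ in that step. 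Third, the flat array $M(v)$ requires $O(1)$-time random access, which the pointer machine cannot provide, so I store the nonzero entries of $M(v)$ in a balanced BST keyed on (row,column) pairs; this preserves the $O((\delta(n)|P(v)|/X)^2)$ space bound up to constants and allows each entry lookup in $O(\lg n)$ time.

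With these replacements I re-apply the analysis of Lemma~\ref{lemma:new_technique} using the Lemma~\ref{lemma:stabbing_4} parameters $h(n)=\phi(n)=\delta(n)=O(\lg^2 n)$ and $\tau(n)=O(1)$. The space becomes $O((n\delta(n)/X)^2 + n\lg n \,(f(n)+h(n))) = O((n/X)^2 \lg^4 n + n\lg^3 n)$, since the $\lg n/\lg\lg n$ term from $f$ is dominated by the $\lg^2 n$ from $h$. For the query time, the computation of $|C_Q(u_l)|$ and $|C_Q(u_r)|$ costs $O(\phi(n))=O(\lg^2 n)$; the first term of Equation~\ref{eq_s_t_intersect} costs $O(|D_Q|\cdot|U_Q|\cdot \lg n) = O(\lg^5 n)$ due to the BST access to $M(v)$; the third term costs $O(X\phi(n)(g(n)+\tau(n)) + \lg n) = O(X\lg^3 n)$; and the second term costs $O(X\phi(n)(\lg n + g(n)+\tau(n))) = O(X\lg^3 n)$ under mergesort. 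Summing yields the stated $O(\lg^5 n + X\lg^3 n)$, and $X=\sqrt{n\lg n}$ gives $O(n\lg^3 n)$ space and $O(\sqrt{n}\lg^{7/2} n)$ query time.

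The only real obstacle is the BST simulation of $M(v)$: I must confirm it does not blow up the space bound. Since $M(v)$ has $O((\delta(n)|P(v)|/X)^2)$ entries and a BST over them uses the same asymptotic space (each node is $O(1)$ words), and since the $\lg n$ slowdown per access is absorbed by the existing $X\lg^3 n$ and $\lg^5 n$ terms rather than adding a new dominant factor, the bounds survive. Everything else — the $T$-construction, the partition of $D_Q, U_Q$ into full-block prefixes and remainders, and the three-term decomposition in Equations~\ref{eq_s_t_intersect}–\ref{eq:h_l} — carries over verbatim on the pointer machine, which yields the corollary.
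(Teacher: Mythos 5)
Your proposal is correct and follows essentially the same route as the paper: you make the same three substitutions (Chazelle's $O(n\lg n/\lg\lg n)$-word, $O(\lg n)$-time 2D range emptiness structure in place of Lemma~\ref{lemma_range_emptiness}, comparison-based mergesort in place of Han's sort, and a balanced-BST index into $M(v)$ in place of $O(1)$-time array access), and you then re-run the Lemma~\ref{lemma:new_technique} accounting with $h(n)=\phi(n)=\delta(n)=O(\lg^2 n)$, $\tau(n)=O(1)$, $g(n)=O(\lg n)$ to obtain $O((n/X)^2\lg^4 n + n\lg^3 n)$ space and $O(\lg^5 n + X\lg^3 n)$ time. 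The check that the BST overhead is absorbed into the $\lg^5 n$ and $X\lg^3 n$ terms without changing the space bound is exactly the point the paper relies on, so there is no gap.
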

	
	\section{Two More Solutions with Better Space Efficiency}
	\label{sect:two_more}
As the space cost in Theorem~\ref{theorem: color_counting_0} is at least $\Omega(n\lg^3 n)$, we now design two more solutions with potentially better space efficiency for 2D orthogonal colored range counting. 
	
\subsection{Achieving $O(n\lg^2 n)$ Space}
	\label{sect: new-method-1}
	We design an alternative solution for 3D stabbing queries over canonical boxes whose space cost is a logarithmic factor less of that in Lemma~\ref{lemma:stabbing_4} asymptotically, and it also satisfies the conditions described in Section~\ref{sec:framework} and can thus be applied in our framework.
	This leads to another time-space tradeoff for 2D orthogonal colored range counting, whose space cost can be as little as $O(n\lg^2 n)$ by choosing the right parameter value.
	
	This new 3D stabbing query solution requires us to design a data structure supporting 2D dominance counting and reporting, 
%
	by augmenting a binary range tree constructed over the $y$-coordinates of the input points.
        Each internal node $v$ of $T$ is conceptually associated with a list, $P(v)$, of points that are leaf descendants of $v$, sorted by $x$-coordinate, but we do not store $P(v)$ explicitly.
        Lemma \ref{lemma: two_dimension_range_searching} presents this data structure.
	Even though better solutions exist for these problems~\cite{jaja2004space,chan2011orthogonal}, Lemma \ref{lemma: two_dimension_range_searching} gives us additional range tree functionalities that are required for our next two solutions to colored range counting.
	\begin{lemma}
		\label{lemma: two_dimension_range_searching}
		Consider a binary range tree $T$ constructed over a set, $P$, of $n$ points on the plane as described above.
		$T$ can be augmented using $O(n)$ additional words such that, given a query range $Q$ which is the region dominated by a point $q$, a set, $S$, of $O(\lg n)$ nodes of $T$ can be located in $O(\lg n)$ time that satisfies the following conditions:
		For each node $v \in S$, there exists a nonempty prefix $L(v)$ of $P(v)$ such that the point set $P \cap Q$ can be partitioned into $|S|$ disjoint subsets, each consisting of the points in such a prefix. 
		Furthermore, the individual sizes of all these subsets can be computed in $O(\lg n)$ time in total, and each point in such a subset can be reported in $O(\lg^{\epsilon} n)$ additional time for any positive constant $\epsilon$. 
	\end{lemma}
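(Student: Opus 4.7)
The plan is to augment the binary range tree $T$ with a wavelet-tree-like layer so that the conceptual lists $P(v)$ can be navigated implicitly without storing them. For every internal node $v$ with children $v_l, v_r$, I would store a bit vector $B_v$ of length $|P(v)|$ whose $i$-th bit is $0$ iff the $i$-th point of $P(v)$ in $x$-sorted order belongs to $P(v_l)$, equipped with the standard constant-time rank/select structures. Summed over all nodes of the balanced tree this is $O(n\lg n)$ bits, i.e., $O(n)$ extra words on the word RAM, matching the target space.

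Given $q$ so that $Q=(-\infty,q.x]\times(-\infty,q.y]$, I would obtain the canonical set $S$ by the standard one-sided $y$-decomposition of $T$: walk from the root toward the leaf whose $y$-coordinate is the predecessor of $q.y$, append to $S$ the left sibling at every right-turn step, and append the terminal leaf if it is itself in range. This yields $|S|=O(\lg n)$ nodes in $O(\lg n)$ time, and by construction the points of $P \cap Q$ partition into the nonempty prefixes $L(v)=\{\,p\in P(v): p.x\le q.x\,\}$ over $v\in S$.

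To obtain the prefix lengths simultaneously, I would perform a single binary search in the $x$-sorted master array at the root to find the position $r_0$ of the $x$-predecessor of $q.x$, and then propagate this position along the same root-to-leaf path using the wavelet-tree transitions $r_{v_l}=\mathrm{rank}_0(B_v,r_v)$ and $r_{v_r}=\mathrm{rank}_1(B_v,r_v)$, each evaluated in $O(1)$ time. Whenever the walk turns right, the position just computed at the left sibling is exactly $|L(v_l)|$, so all sizes are produced in $O(\lg n)$ total time, as required. Crucially, the same bit vectors also let us go \emph{from} a position at any $v\in S$ down or up, which is the hook for reporting.

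For reporting an individual point in a prefix, I would plug in a ball-inheritance structure, which is the standard companion to wavelet trees: with $O(n)$ additional words of space, one can map any position $j$ at any node $v$ to its corresponding position in the root-level $x$-sorted sequence in $O(\lg^\epsilon n)$ time for any constant $\epsilon>0$, after which the actual coordinates are read in $O(1)$ from the master array. The main obstacle is precisely this last bound: a naive downward walk in the wavelet tree would cost $\Theta(\lg n)$ per reported point, so the argument hinges on importing the known linear-space ball-inheritance time/space tradeoff to shave the reporting time down to $O(\lg^\epsilon n)$ while keeping the total augmentation at $O(n)$ words.
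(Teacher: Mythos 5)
Your proof is correct and follows essentially the same approach as the paper: a wavelet-tree-style bit vector with constant-time rank at each node of the range tree to propagate the $x$-rank of $q.x$ and read off prefix lengths, combined with the ball-inheritance structure of Chan et al.\ to decode individual points in $O(\lg^{\epsilon}n)$ time using $O(n)$ words. The only cosmetic difference is that the paper phrases the rank-propagation after first mapping to rank space, while you start from an explicit binary search at the root, but the mechanism and bounds are identical.
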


\begin{proof}
	For simplicity, we assume that point coordinates are in rank space; if we duplicate $P$ in two sorted sequences, one ordered by $x$-coordinate and the other by $y$-coordinate, the time required to perform the conversion between original coordinates and coordinates in rank space does not affect the claimed query time.
	Then each leaf of the range tree $T$ represents an integer range $[p.y, p.y]$ if $p$ is stored at this leaf.
	The range represented by an internal node of $T$ is the union of the ranges represented by its children. 
	
	At each internal node $v$ of $T$, we store a bit vector, $\mathcal{B}(v)$, such that if the point $P(v)[i]$ is a leaf descendant of the left child of $v$, then $\mathcal{B}(v)[i]$ is set to 0; otherwise $\mathcal{B}(v)[i]$ is set to 1.
	We construct a data structure of $O(|\mathcal{B}(v)|)$ bits of space upon $\mathcal{B}(v)$ to support the computation of $\rankop(v, k)$, which is $\sum_{i\leq k} B(v)[i]$, for any $k$ in constant time \cite{clark1996efficient}.
	These bit vectors over all internal nodes $v$ of $T$ use $\sum_v O(|\mathcal{B}(v)|)=O(n\lg n)$ bits, which is $O(n)$ words of space.
	
	Given a query range $Q=[1, q.x] \times [1, q.y]$, we find the path, $\pi$, from the root node of $T$ to the $q.y$-th leaf. 
	For each node $u$ in $\pi$, if it is the right child of its parent, we add its left sibling, $v$, into a set $S'$.
	We also add the $q.y$-th leaf into $S'$.
	Then the ranges represented by the nodes in $S'$ form a partition of the query $y$-range $[1, q.y]$, so $P \cap Q \subseteq \cup_{v\in S'} P(v)$. 
	Furthermore, for each node $v \in S'$, we add it into $S$ if $|P(v)\cap Q| > 0$. 
	To compute $|P(v)\cap Q|$, observe that, since the $y$-coordinates of points in $P(v)$ are within the query $y$-range, and these points are increasingly sorted by their $x$-coordinates, $|P(v)\cap Q|$ is equal to the index, $i$, of the rightmost point of $P(v)$ whose $x$-coordinate is no more than $q.x$. Furthermore, if $|P(v)\cap Q| > 0$, then $L(v) = P(v)[1..i]$.
	Thus the following observation is crucial: Let $s$ and $t$ be two nodes of $T$, where $s$ is the parent of $t$, let $j$ be the number of points in $P(s)$ whose $x$-coordinates are no more than $q.x$.
	Then, if $t$ is the left child of $s$, the number of points in $P(t)$ whose $x$-coordinates are no more than $q.x$ is $|P(s)| - rank(s, j)$.
	Otherwise, it is $\rankop(s, j)$.
	We know that at the root node $r$, the number of points in $P(r)$ whose $x$-coordinates are no more than $q.x$ is simply $q.x$.
	Then, during the top down traversal of $\pi$, we can make use of this observation and perform up to two $\rankop$ operations at each level, so that for any node $u \in \pi \cup S'$, we can compute the index of the rightmost point of $P(u)$ whose $x$-coordinate is no more than $q.x$.
	This way we can compute $S$ and for each $v \in S$, compute $|L(u)|$. The total running time is $O(\lg n)$. 
	
	Now we show how to report the coordinates of each point in $L(v)$ for any $v \in S$. 
	If $P(v)$ were explicitly stored at node $v$, then each point in the query range could be reported in constant time. 
	However, to store the lists for all internal nodes, it would require $O(n\lg n)$ words, which is not affordable.
	Instead, we define the operator, $\point(v, i)$, which computes coordinates of $P(v)[i]$ for each internal node $v$ of $T$. 
	To compute $\point(v, i)$, we can augment $T$ with the {\em ball inheritance} data structure \cite{chan2011orthogonal}, which can use $O(n)$ additional words of space to support $\point$ in $O(\lg^{\epsilon} n)$ time.
	Therefore, each point in the query range can be reported in $O(\lg^{\epsilon} n)$ time without explicitly storing $P(v)$.
\end{proof}
	
	Next we present the new stabbing query data structure.
	This time we construct data structures consisting of three layers of trees to answer stabbing queries over a given set of $n$ canonical boxes in three-dimensional space, but a different tree structure is adopted in each layer. 
	At the top-layer we construct a segment tree over the $z$-coordinates of the boxes.
	More precisely, we project each box onto the $z$-axis to obtain an interval, and the segment tree is constructed over all these intervals.
	A box is assigned to a node in this tree if its corresponding interval on the $z$-axis is associated with this node.
	For each node $v$ in the top-layer segment tree, we further construct an interval tree over the projections of the boxes assigned to $v$ on the $y$-axis, and the interval trees constructed for all the nodes of the top-layer tree form the middle-layer structure.
	For each node $v'$ of an interval tree, we use the set $B(v')$, of boxes assigned to it to define the two point sets, $S_{lower}(v')$ and $S_{upper}(v')$, on the plane as follows:
	We project all the boxes in $B(v')$ onto the $xy$-plane and get a set of right-open rectangles.
	Then, $S_{lower}(v')$ is the set of the lower left vertices of these rectangles (henceforth called {\em lower points}), i.e., $\{(B.x_1, B.y_1) | B\in B(v')\}$, and $S_{upper}(v')$ is the set of the upper left vertices (henceforth called {\em upper points}), i.e., $\{(B.x_1, B.y_2) | B\in B(v')\}$.
	We then use Lemma~\ref{lemma: two_dimension_range_searching} to build a pair of binary range trees, $T_{lower}(v')$ and $T_{upper}(v')$, over $S_{lower}(v')$ and $S_{upper}(v')$, respectively. 
	The range trees constructed for all these interval tree nodes form the bottom-layer structure.
	
	Recall that, each node $v''$ of a binary range tree in the bottom layer is conceptually associated with a list, $P(v'')$, of lower or upper points, which are the points stored in the leaf descendants of $v''$, sorted by $x$-coordinate.
	Each point of $P(v'')$ represents a box.
	Since there is a one-to-one correspondence between a point in $P(v'')$ and the box it represents, we may abuse notation and use $P(v'')$ to refer to the list of boxes that these points represent when the context is clear.
	Hence $P(v'')$ is the bottom list when the data structure is used in our framework.
	Lemma~\ref{lemma:stabbing_5} summarizes the solution.

	
	\begin{lemma}
		\label{lemma:stabbing_5}
		Given a set of $n$ canonical boxes in three dimension, the data structure above occupies $O(n\lg n)$ words and answers stabbing counting queries in $O(\lg^3 n)$ time and stabbing reporting queries in $O(\lg^3 n+k\cdot \lg^{\epsilon} n)$ time, where $k$ denotes the number of boxes reported. 
		Furthermore, the set of reported boxes is the union of $O(\lg^3 n)$ different disjoint sets, each of which is a nonempty prefix of some bottom list in the data structure. 
	\end{lemma}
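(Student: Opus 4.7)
The plan is to analyze space and query cost through a layer-by-layer charging argument, and to reduce the per-node subproblem at an interval-tree node to a 2D dominance query so that Lemma~\ref{lemma: two_dimension_range_searching} applies directly. For the space bound, in the top-layer segment tree each box is assigned to $O(\lg n)$ canonical nodes, so $\sum_{v} |B(v)| = O(n\lg n)$. At each top-layer node $v$ the interval tree on the $y$-projections of $B(v)$ stores each box at exactly one interval-tree node, contributing $O(|B(v)|)$ words. At each interval-tree node $v'$ the two range trees $T_{lower}(v')$ and $T_{upper}(v')$ each use $O(|B(v')|)$ words by Lemma~\ref{lemma: two_dimension_range_searching}, so summing over the nodes of the interval tree rooted at $v$ gives $O(|B(v)|)$ words, and summing over $v$ yields the claimed $O(n\lg n)$ total.

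For a stabbing query at $q$, I would descend the root-to-leaf path in the top-layer segment tree for $q.z$, visiting $O(\lg n)$ nodes; at each visited top node $v$ I would descend the root-to-leaf path in its interval tree for $q.y$, visiting another $O(\lg n)$ nodes; and at each visited interval-tree node $v'$ I would reduce the remaining stabbing condition to a 2D dominance query. Specifically, if $q.y$ lies in the left subtree of $v'$ then every box in $B(v')$ already satisfies $q.y \le B.y_2$, so the conditions $B.x_1 \le q.x$ and $B.y_1 \le q.y$ are precisely a dominance query on the lower points, answered with $T_{lower}(v')$; if $q.y$ lies in the right subtree I would answer with $T_{upper}(v')$ after reflecting the $y$-axis in preprocessing so that $B.y_2 \ge q.y$ again becomes dominance; if $q.y$ equals the median of $v'$ all of $B(v')$ qualifies and either range tree suffices with an unconstrained $y$-range. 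In every case Lemma~\ref{lemma: two_dimension_range_searching} returns $O(\lg n)$ range-tree nodes whose prefixes $L(\cdot)$ partition the relevant boxes, together with the prefix sizes, in $O(\lg n)$ time.

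The three nested descents therefore visit $O(\lg^3 n)$ prefixes in $O(\lg^3 n)$ time, which gives the counting bound; for reporting, each individual point inside a prefix can be produced in $O(\lg^\epsilon n)$ additional time using the $\point$ operator from Lemma~\ref{lemma: two_dimension_range_searching}, yielding the $O(\lg^3 n + k\lg^\epsilon n)$ bound. To argue that all $O(\lg^3 n)$ prefixes are pairwise disjoint I would combine three independent facts: on the segment-tree path for $q.z$ each box whose $z$-range contains $q.z$ is assigned to exactly one visited top-layer node (canonical nodes on the search path carry disjoint intervals); on the interval-tree path at a fixed top-layer node each box is assigned to exactly one interval-tree node; and within a fixed interval-tree node the prefixes produced by Lemma~\ref{lemma: two_dimension_range_searching} are disjoint by construction. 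The main obstacle I anticipate is making the interval-tree reduction rigorous: one must verify that the residual $y$-condition at $v'$ always collapses to a one-sided inequality consistent with a 2D dominance query on either $S_{lower}(v')$ or $S_{upper}(v')$, and one must commit during preprocessing to the $y$-axis reflection used by $T_{upper}(v')$ so that no query-time overhead is incurred and the nonempty-prefix guarantee of Lemma~\ref{lemma: two_dimension_range_searching} continues to hold in the reflected instance.
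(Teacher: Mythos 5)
Your proposal follows essentially the same approach as the paper: descend the top-layer segment tree on $q.z$ to collect $O(\lg n)$ nodes, descend the middle-layer interval tree on $q.y$ at each of them, and at each interval-tree node reduce the residual $y$-condition (which is one-sided because all boxes assigned there have $y$-ranges straddling the node's median) to a 2D dominance query on $S_{lower}$ or $S_{upper}$ via Lemma~\ref{lemma: two_dimension_range_searching}. You supply a few details the paper leaves implicit — the $y$-axis reflection needed so the $T_{upper}$ query is a genuine dominance query, the $q.y = m(v')$ boundary case, and the explicit three-layer disjointness argument — but these are consistent elaborations of the same argument rather than a different route.
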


\begin{proof}
	The top-layer segment tree occupies $O(n\lg n)$ words, while both interval trees and binary ranges trees from Lemma~\ref{lemma: two_dimension_range_searching} are linear-space data structures.
	Therefore, the overall space cost is $O(n\lg n)$ words. 
	
	To show how to answer a query, let $q$ be the query point. 
	Our query algorithm first searches for $q.z$ in the top-layer segment tree.
	This locates $O(\lg n)$ nodes of the top-layer tree. 
	Each node $v$ located in this phase stores a list, $B(v)$, of boxes whose $z$-ranges contain $q.z$.
	It now suffices to show how to count and report the boxes in each list $B(v)$ whose projections on the $xy$-plane contain $(q.x, q.y)$.
	To do this, we use the interval tree in the middle layer that is constructed over $B(v)$.
	In an interval tree, each node $v'$ stores the median, denoted by $m(v')$, of the endpoints of the intervals associated with its descendants (including itself), and it also stores a set, $I(v')$, of the intervals that contain $m(v')$.
	In our case, each interval in $I(v')$ is the $y$-range of a canonical box in $B(v)$, its left endpoint corresponds to the $y$-coordinate of a point in $S_{lower}(v')$, and its right endpoint corresponds to the $y$-coordinate of a point in $S_{upper}(v')$.
	The next phase of our algorithm then starts from the root, $r$, of this interval tree.
	If $q.y \le m(r)$, then an interval in $I(v')$ contains $q.y$ iff its left endpoint is less than or equal to $q.y$.
	This means, among the points in $S_{lower}(v')$, those lie in $(-\infty, q.x]\times(-\infty, q.y]$ correspond to the boxes whose projections on the $xy$-plane contain $(q.x, q.y)$.
	Since the $z$-range of these boxes already contains $q.z$, they contain $q$ in the three-dimensional space. 
	Hence, by performing a dominance query over $T_{lower}(v')$ using $(-\infty, q.x]\times(-\infty, q.y]$ as the query range, we can compute these boxes.
	Then, since the nodes in the right subtree $r$ store intervals whose left endpoints are greater than $m(r)$ which is at least $q.y$, none of these intervals can possibly contain $q.y$. Thus, we descend to the left child of $r$ afterwards and repeat this process.
	If $q.y > m(r)$ instead, then we perform a dominance query over $T_{upper}(v')$ using  $[-\infty, q.x]\times(q.y, +\infty)$ as the query range to compute the boxes associated with $r$ that contain $q$, descend to the right child of $r$, and repeat.

	To analyze the running time, observe that this algorithm locates $O(\lg n)$ nodes in the top-layer segment tree, and for each of these nodes, it further locates $O(\lg n)$ nodes in the middle-layer interval trees.
	Hence, we perform a 2D dominance counting and reporting query using Lemma~\ref{lemma: two_dimension_range_searching} for each of these $O(\lg^2 n)$ interval tree nodes, and the proof completes. 
\end{proof}
	
	In an interval tree, each interval is stored in exactly one node, while in a segment tree or a binary range tree, each interval or point can be associated with $O(\lg n)$ nodes.
	Therefore, this data structure has duplication factor $\delta = O(\lg^2 n)$. 
	If we combine Lemmas~\ref{lemma:stabbing_4} and \ref{lemma:new_technique}, we also have $h(n) = O(\lg n)$, $\phi(n) = O(\lg^3 n)$ and $\tau(n) = O(\lg^{\epsilon} n)$.
	We again use part b) of Lemma~\ref{lemma_range_emptiness} to implement $E(v_l)$ and $E(v_r)$, so $f(n) = O(\lg\lg n)$ and $g(n) = O(\lg\lg n)$.
	Hence: 
	
	\begin{theorem}
		\label{theorem: color_counting_1}
		Given $n$ colored points on the plane, there is a data structure of $O((\frac{n}{X})^2 \lg^4 n+ n\lg^2 n)$ words of space that answers colored orthogonal range counting queries in $O(\lg^6 n + X\lg^{3+\epsilon} n)$ time, where $X$ is an integer parameter in $[1, n]$ and $\epsilon$ is an arbitrary positive constant.
		In particular, setting $X=\sqrt{n} \lg n$ yields an $O(n\lg^2 n)$-word structure with $O(\sqrt{n}\lg^{4+\epsilon} n)$ query time.
	\end{theorem}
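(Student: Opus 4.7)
The plan is to obtain Theorem~\ref{theorem: color_counting_1} by plugging the stabbing-query structure of Lemma~\ref{lemma:stabbing_5} into the general framework given by Lemma~\ref{lemma:new_technique}, using part b) of Lemma~\ref{lemma_range_emptiness} to implement the colored range emptiness structures $E(v_l)$ and $E(v_r)$. So the proof is essentially an exercise in verifying that the hypotheses of Lemma~\ref{lemma:new_technique} are met and then doing the parameter substitution.

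First I would read off the four quantities that Lemma~\ref{lemma:new_technique} depends on from the statement and construction of Lemma~\ref{lemma:stabbing_5}. The structure in Lemma~\ref{lemma:stabbing_5} occupies $O(n \lg n)$ words, so $h(n) = O(\lg n)$. A stabbing query returns $O(\lg^3 n)$ disjoint prefixes, so $\phi(n) = O(\lg^3 n)$, and after the framework locates such a prefix its length is known in $O(1)$ time while each box in the prefix can be recovered in $O(\lg^\epsilon n)$ time via the $\point$ operation provided by Lemma~\ref{lemma: two_dimension_range_searching}; hence $\tau(n) = O(\lg^\epsilon n)$. Finally, the hypotheses on $f(n)$ and $g(n)$ come from part b) of Lemma~\ref{lemma_range_emptiness}, giving $f(n) = g(n) = O(\lg \lg n)$.

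The one nonobvious parameter is the duplication factor $\delta(n)$. Here I would argue as in the remark preceding the theorem: a single canonical box is associated with $O(\lg n)$ nodes of the top-layer segment tree; within each such top-layer node, it is stored at exactly one node of the middle-layer interval tree (a defining property of interval trees); and in the bottom-layer binary range tree of Lemma~\ref{lemma: two_dimension_range_searching}, the corresponding lower (or upper) point appears on an ancestor path of length $O(\lg n)$, hence lies in $P(v'')$ for $O(\lg n)$ different nodes $v''$. Multiplying these contributions yields $\delta(n) = O(\lg^2 n)$.

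Substituting into Lemma~\ref{lemma:new_technique} then gives a space cost of
\[
O\!\left(\tfrac{n^{2}\delta(n)^{2}}{X^{2}} + n\lg n\bigl(f(n)+h(n)\bigr)\right) = O\!\left(\tfrac{n^{2}\lg^{4} n}{X^{2}} + n\lg^{2} n\right)
\]
and a query time of
\[
O\!\left(\phi(n)^{2} + X\phi(n)\bigl(\lg\lg n + g(n) + \tau(n)\bigr) + \lg n\right) = O\!\left(\lg^{6} n + X\lg^{3+\epsilon} n\right),
\]
which is exactly the tradeoff claimed. The specific bound follows by setting $X = \sqrt{n}\lg n$, which makes the two terms in the space expression asymptotically equal and reduces the query time to $O(\sqrt{n}\lg^{4+\epsilon} n)$. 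The only place I anticipate needing care is in justifying the duplication factor and in confirming that the $\point$-based reporting time $O(\lg^\epsilon n)$ is the right value of $\tau(n)$ — in particular, that reporting the color of each such box takes no additional asymptotic time, which is immediate since the color can be stored alongside the box identifier returned by $\point$.
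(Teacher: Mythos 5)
Your proposal is correct and follows exactly the paper's own route: plug Lemma~\ref{lemma:stabbing_5} into the framework of Lemma~\ref{lemma:new_technique} with part~b) of Lemma~\ref{lemma_range_emptiness}, reading off $h(n)=O(\lg n)$, $\phi(n)=O(\lg^3 n)$, $\tau(n)=O(\lg^\epsilon n)$, $\delta(n)=O(\lg^2 n)$, $f(n)=g(n)=O(\lg\lg n)$, and then substituting. Your duplication-factor argument (segment tree $O(\lg n)$ $\times$ interval tree $O(1)$ $\times$ range tree $O(\lg n)$) is precisely the justification the paper gives in the sentence preceding the theorem, and your parameter substitution and the specialization $X=\sqrt{n}\lg n$ are carried out correctly. (Incidentally, the paper's text at this point misprints Lemma~\ref{lemma:stabbing_4} where it clearly means Lemma~\ref{lemma:stabbing_5}; your version fixes this.)
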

	
	\subsection{Achieving $O(n\lg n)$ Space}
	\label{sect: new-method-2}
	
	We further improve the space cost of the data structure for 3D stabbing queries over canonical boxes. 
	The new solution also satisfies the conditions described in Section~\ref{sec:framework} and can thus be applied in our framework.
	This leads to our third time-space tradeoff for 2D orthogonal colored range counting, whose space cost can be as little as $O(n\lg n)$ by choosing the right parameter value.
	
	Our solution will use a space-efficient data structure for 3D orthogonal range searching.
	Using a range tree with node degree $\lambda \in [2, n]$, we can transform a 2D linear space data structure shown in Lemma~\ref{lemma: two_dimension_range_searching} into a three-dimensional data structure that uses $O(n\log_{\lambda} n)$ words.
	Therefore, we can solve 3D dominance range searching as shown in Lemma~\ref{lemma:range_counting_reporting}.
	\begin{lemma}
		\label{lemma:range_counting_reporting}
		Given $n$ points in three dimension, there is a data structure of $O(n\log_{\lambda} n)$ words of space that answers 3D dominance counting queries in $O(\lambda \lg n \cdot \log_{\lambda} n)$ time and 3D dominance reporting queries in $O(\lambda \lg n \cdot \log_{\lambda} n+k\lg^{\epsilon} n)$ time, where $k$ is the number of reported points and $\lambda$ is an integer parameter in $[2, n]$.
	\end{lemma}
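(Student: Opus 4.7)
The plan is to build a $\lambda$-ary range tree $T$ on the $z$-coordinates of the $n$ input points, and augment every internal node with a copy of the 2D data structure from Lemma~\ref{lemma: two_dimension_range_searching}. Concretely, $T$ has branching factor $\lambda$ and height $h = O(\log_\lambda n)$; each leaf stores one input point, sorted left-to-right by $z$-coordinate. For each internal node $v$, let $P(v)$ denote the set of points in the leaf descendants of $v$, and at $v$ we build the structure of Lemma~\ref{lemma: two_dimension_range_searching} on the projections of $P(v)$ onto the $xy$-plane. Since every input point lies in the subtree of exactly one node at each level of $T$, it contributes to the 2D structures of $O(\log_\lambda n)$ nodes; because each such 2D structure is linear in the number of points it indexes, the total space is $O(n\log_\lambda n)$ words as required.

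For a dominance query at $q = (q.x, q.y, q.z)$, I would walk the root-to-leaf search path $\pi$ in $T$ determined by $q.z$. At each node $v \in \pi$ let $c$ be the unique child of $v$ into which $\pi$ descends. Every child of $v$ that lies strictly before $c$ has all of its leaves satisfying $z\le q.z$, while every child of $v$ after $c$ has all of its leaves violating $z\le q.z$. The answer therefore decomposes into, for each such ``left'' sibling $s$ of $c$ at each level of $\pi$, the set of points of $P(s)$ dominated by $(q.x,q.y)$ in the $xy$-plane. We obtain these via a 2D dominance query, with range $(-\infty,q.x]\times(-\infty,q.y]$, on the Lemma~\ref{lemma: two_dimension_range_searching} structure stored at $s$; summing the $O(1)$-time-computable subset sizes yields the count, and concatenating the subsets yields the reporting output.

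The time analysis now follows the shape of the lemma. The path $\pi$ visits $O(\log_\lambda n)$ nodes, and at each such node we issue at most $\lambda-1$ 2D dominance queries. By Lemma~\ref{lemma: two_dimension_range_searching}, each 2D query locates its $O(\lg n)$ canonical subsets and computes their total size in $O(\lg n)$ time, and each reported point costs an additional $O(\lg^\epsilon n)$ via the ball-inheritance operator. Hence the counting cost is $\sum_{v\in\pi}(\lambda-1)\cdot O(\lg n) = O(\lambda \lg n \cdot \log_\lambda n)$, and the reporting cost adds $O(k\lg^\epsilon n)$ over all reported points, matching the stated bounds.

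The main obstacle is not the asymptotic argument, which is a routine range-tree reduction, but rather making sure the 2D substructures can be consulted in rank space at each node as Lemma~\ref{lemma: two_dimension_range_searching} requires. This is handled once and for all in preprocessing by storing, at every node $v$, the ranks of the $x$- and $y$-coordinates of $P(v)$, along with pointers that let a query first convert $(q.x,q.y)$ into the local rank space in $O(\lg n)$ time per level; these auxiliary data fit within the $O(n\log_\lambda n)$-word space budget and are dominated by the query bound already established, so they do not affect either claimed complexity.
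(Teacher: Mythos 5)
Your proposal is correct and matches the paper's own proof essentially step for step: a $\lambda$-ary range tree on $z$-coordinates, a Lemma~\ref{lemma: two_dimension_range_searching} structure at each internal node, a root-to-leaf walk that queries the left siblings along the path, and the same $O(\lambda\lg n\cdot\log_\lambda n)$ counting cost plus $O(k\lg^\epsilon n)$ reporting overhead. The only two wrinkles the paper handles slightly more carefully: it explicitly descends to the \emph{predecessor} leaf and adds that leaf itself to the queried set (your walk only queries left siblings of the descent child, so the single point at the path's leaf could be missed), and the rank-space conversion you flag as the ``main obstacle'' is already absorbed into Lemma~\ref{lemma: two_dimension_range_searching}'s own statement and proof, so no extra machinery is needed.
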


\begin{proof}
	The data structure is a balanced $\lambda$-ary range tree, $T$, constructed upon $z$-coordinates of the points.
	Each leaf of T stores an input point, from left to right, and all leaves are increasingly sorted by the $z$-coordinates of the points.
	At each internal node $v$, we explicitly store a list, $P(v)$, of points that are leaf descendants of $v$.
	Given $P(v)$, we construct a 2D dominance range searching data structure upon the $x$- and $y$-coordinates of points following Lemma~\ref{lemma: two_dimension_range_searching}.
	Since $T$ has $\log_{\lambda} n$ tree levels and each level stores data structures of $O(n)$ words, the data structure uses $O(n\log_{\lambda} n)$ words of space.
	
	Given a query range $Q=(-\infty, q.x] \times (-\infty, q.y]\times(-\infty, q.z]$, we find the path, $\pi$, from the root node of $T$ to the $z'$-th leaf, where $z'$ is the predecessor of $q.z$ and can be found in $O(\lg n)$ time by binary searches upon the leaf points.
	For each node $u$ in $\pi$, we add all its left siblings, $v$, into a set $S'$.
	We also add the $z'$-th leaf into $S'$.
	Then the ranges represented by the nodes in $S'$ form a partition of the query $z$-range $(-\infty, q.z]$, so the set of all reported points is a subset of $\cup_{v\in S'} P(v)$. 
	For each $v \in S'$, observe that, since the $z$-coordinates of points in $P(v)$ are within the query $z$-range, we can find the points $P(v)\cap Q$ and compute $|P(v)\cap Q|$ by 2D dominance searching upon $P(v)$ following Lemma~\ref{lemma: two_dimension_range_searching}.
	As $\pi$ has $O(\log_{\lambda} n)$ nodes and each node $v$ of $\pi$ has $O(\lambda)$ siblings, a dominance counting (resp. reporting) query takes $O(\lambda \lg n \cdot \log_{\lambda} n)$ (resp. $O(\lambda \lg n \cdot \log_{\lambda} n+k\lg^{\epsilon} n)$) time. 
\end{proof}

Next, we design a new data structure for stabbing queries over 3D canonical boxes.
It again contains trees of three layers, with interval trees in the top- and middle- layers plus the data structures for 3D dominance range searching from Lemma~\ref{lemma:range_counting_reporting} in the bottom layer.
More precisely, the structure at the top layer is an interval tree, $T_1$, constructed over the $z$-coordinates of the boxes. 
That is, we project each box onto the $z$-axis to obtain an interval, and the interval tree is constructed over all these intervals. 
A box is assigned to a node in this tree if its corresponding interval on the $z$-axis is associated with this node. 
For each node $v$ in the top-layer interval tree, we further construct an interval tree, $T_2(v)$, over the projections of the boxes assigned to $v$ on the $y$-axis, and the interval trees constructed for all the nodes of the top-layer tree form the middle-layer structure.
For each node $v'$ of an interval tree in the middle-layer, we use the set $B(v')$, of boxes assigned to it to define the four point sets, $S_{ll}(v')$, $S_{lr}(v')$, $S_{ul}(v')$, and $S_{ur}(v')$ in 3D such that $S_{ll}(v') = \{(B.x_1, B.y_1, B.z_1) | B\in B(v')\}$, $S_{lr}(v')=\{(B.x_1, B.y_1, B.z_2) | B\in B(v')\}$, $S_{ul}(v')=\{(B.x_1, B.y_2, B.z_1) | B\in B(v')\}$, and $S_{ur}(v')=\{(B.x_1, B.y_2, B.z_2) | B\in B(v')\}$.
We then use Lemma~\ref{lemma:range_counting_reporting} to build a set of 3D dominance range searching structures, $T_{ll}(v')$, $T_{lr}(v')$, $T_{ul}(v')$ and $T_{ur}(v')$, over $S_{ll}(v')$, $S_{lr}(v')$, $S_{ul}(v')$, and $S_{ur}(v')$, respectively. 
See Figure \ref{fig-point} for an illustration.
The 3D dominance range searching structures constructed for the nodes of all interval trees in the middle-layer form the bottom-layer structure.
	\begin{figure}[h]
		\centering
		{\includegraphics[scale=1]{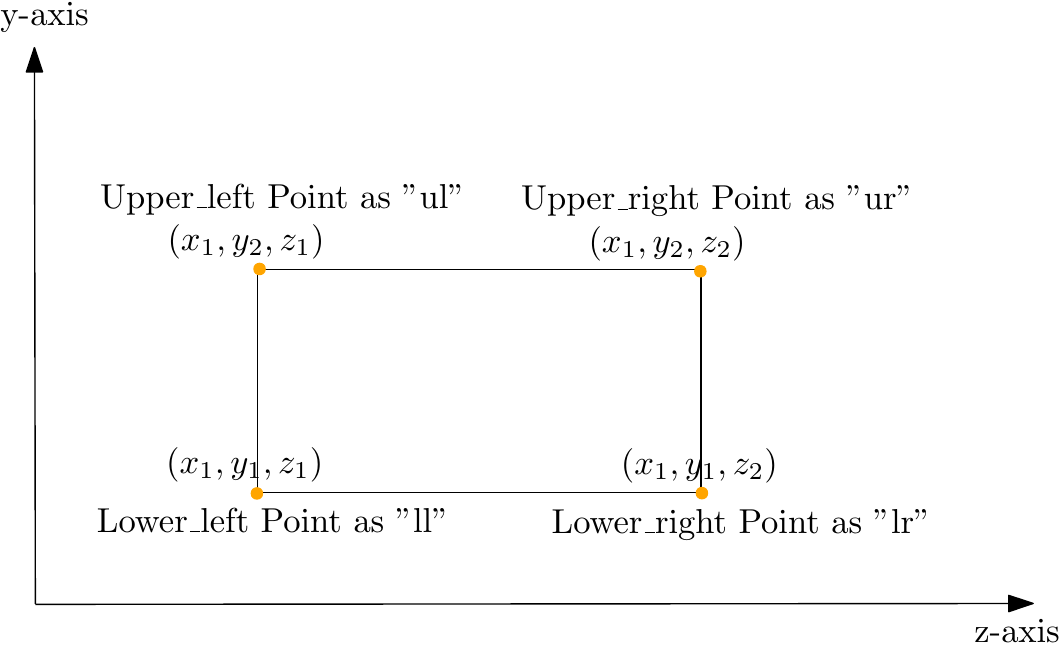}}
		\caption{\label{fig-point}
			The figure shows given a box projected on yz-plane, how we assign its endpoints into different sets $S_{ll}, S_{lr}, S_{ul}$, and $S_{ur}$.}
	\end{figure}
	
	As shown in the previous section, we need to identify the bottom lists from the data structures described above, which would be used in our framework. 
	Without loss of generality, we take the 3D dominance range searching structure, $T_{ll}(v')$, built in the bottom-layer as an example.
	Observe that $T_{ll}(v')$ is a $\lambda$-ary range tree, of which each internal node, $v''$, stores a binary range tree data structure $T_b(v'')$ implemented by Lemma \ref{lemma: two_dimension_range_searching} for 2D dominance range searching.
	Recall that, each node $\hat{v}$ of a binary range tree $T_b(v'')$ is conceptually associated with a list, $P(\hat{v})$, of points from the set $S_{ll}(v')$, which are the points stored in the leaf descendants of $\hat{v}$, sorted by $x$-coordinate.
	Each point of $P(\hat{v})$ represents a box.
	Since there is a one-to-one correspondence between a point in $P(\hat{v})$ and the box it represents, we may abuse notation and use $P(\hat{v})$ to refer to the list of boxes that these points represent when the context is clear.
	Hence $P(\hat{v})$ is the bottom list when the data structure is used in our framework.
	The following lemma summarizes this solution:
	\begin{lemma}
		\label{lemma:stabbing_general}
		Given a set of $n$ canonical boxes in three dimension, the data structure described above occupies $O(n\log_{\lambda} n)$ words of space and answers stabbing counting queries in $O(\lg^2 n \cdot \lambda \lg n \cdot \log_{\lambda} n)$ time and stabbing reporting queries in $O(\lg^2 n \cdot \lambda \lg n \cdot \log_{\lambda} n+ k \cdot \lg^{\epsilon} n)$ time, where $k$ denotes the number of boxes reported and $\lambda$ is an integer parameter in $[2, n]$.
		Furthermore, the set of reported boxes is the union of $O(\lg^2 n \cdot \lambda \lg n \cdot \log_{\lambda} n)$ different disjoint sets, each of which is a nonempty prefix of some bottom list in the data structure.
	\end{lemma}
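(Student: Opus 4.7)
The plan is to establish the three claims in turn: the space bound, the query times, and the decomposition of the reported set into prefixes of bottom lists. The backbone of the argument is the fact that in an interval tree each interval is stored at exactly one node, so the input boxes are partitioned both across the nodes of the top-layer tree $T_1$ and, within each $T_2(v)$, across its middle-layer nodes. Consequently $\sum_{v'}|B(v')|=O(n)$, and at each middle-layer $v'$ the four dominance structures built by Lemma~\ref{lemma:range_counting_reporting} occupy $O(|B(v')|\log_\lambda n)$ words in total. Summing over all $v'$ gives $O(n\log_\lambda n)$ for the bottom layer, while the interval trees themselves cost only $O(n)$, proving the space bound.

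For the query algorithm, I would first run the standard interval-tree stabbing procedure on $q.z$ in $T_1$, visiting $O(\lg n)$ top-layer nodes $v$. At each such $v$, the surviving sub-query is: report the boxes in $B(v)$ whose $z$-range contains $q.z$ and whose $yx$-projection contains $(q.y,q.x)$. Because of the interval-tree split, the $z$-condition reduces at $v$ to $B.z_1\le q.z$ when $q.z\le m(v)$ and to $B.z_2\ge q.z$ when $q.z>m(v)$. I then descend into $T_2(v)$ with $q.y$, visiting $O(\lg n)$ middle-layer nodes $v'$; at each $v'$ the $y$-condition similarly collapses to either $B.y_1\le q.y$ or $B.y_2\ge q.y$ according to whether $q.y\le m(v')$. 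The $x$-condition is always $B.x_1\le q.x$. The four combinations of $y$- and $z$-sides are exactly what $S_{ll}, S_{lr}, S_{ul}, S_{ur}$ encode, so the residual problem at each $(v,v')$ pair is a 3D dominance query over the appropriate one of $T_{ll}(v'), T_{lr}(v'), T_{ul}(v'), T_{ur}(v')$; a fixed coordinate negation done at preprocessing time turns each orientation into standard dominance form for Lemma~\ref{lemma:range_counting_reporting}.

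The time bounds now follow by multiplication: there are $O(\lg^2 n)$ pairs $(v,v')$, and Lemma~\ref{lemma:range_counting_reporting} handles each in $O(\lambda\lg n\log_\lambda n)$ time for counting, with an additional $O(\lg^\epsilon n)$ per reported box for reporting, yielding the claimed $O(\lg^2 n\cdot\lambda\lg n\log_\lambda n)$ and $O(\lg^2 n\cdot\lambda\lg n\log_\lambda n+k\lg^\epsilon n)$ bounds. For the final claim, Lemma~\ref{lemma:range_counting_reporting} (via its use of Lemma~\ref{lemma: two_dimension_range_searching}) returns at each $v'$ a union of $O(\lambda\lg n\log_\lambda n)$ nonempty prefixes of lists $P(\hat v)$ stored at internal nodes of the binary range trees inside the dominance structure. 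Summing over all $(v,v')$ pairs gives $O(\lg^2 n\cdot\lambda\lg n\log_\lambda n)$ such prefixes; they are pairwise disjoint because each box contains $q$ through a unique combination of top-layer node, middle-layer node, and sides of the two medians, so it is produced by exactly one of the four bottom-layer sub-queries.

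The main obstacle I expect is the bookkeeping for the four orientation cases: one must verify that at every visited $(v,v')$ pair the correct one of $T_{ll}, T_{lr}, T_{ul}, T_{ur}$ is queried with a half-space that faithfully encodes the original stabbing condition, and that the coordinate flip needed to present the sub-query as a standard dominance query does not inflate the preprocessing or query-time complexity. Once this is pinned down, the space summation via interval-tree partitioning, the multiplicative query-time analysis, and the disjointness of the reported prefixes all parallel the corresponding steps used in the proof of Lemma~\ref{lemma:stabbing_5}, so the argument is essentially routine.
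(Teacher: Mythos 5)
Your proposal is correct and follows essentially the same route as the paper's proof: the space bound via the interval-tree partition and the $O(n\log_\lambda n)$ cost of the bottom-layer dominance structures, the query algorithm that traverses a path in $T_1$ and then in $T_2(v)$, dispatching to one of $T_{ll},T_{lr},T_{ul},T_{ur}$ based on the sides of the medians, and the multiplicative bound of $O(\lg^2 n)$ interval-tree pairs each handled by Lemma~\ref{lemma:range_counting_reporting}. The brief remark about preprocessing-time coordinate negation to put the four orientations into standard dominance form, and the explicit disjointness argument via uniqueness of the $(v,v')$ pair for each box, are sound and in fact make explicit steps that the paper leaves implicit.
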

	\begin{proof}
		The top- and middle- layer interval trees are linear-space data structure, while the 3D dominance range searching structures from Lemma~\ref{lemma:range_counting_reporting} occupy $O(n\log_{\lambda} n)$ words of space in total.
		Therefore, the overall space cost is $O(n\log_{\lambda} n)$ words. 
		
		To show how to answer a query, let $q=(q.x, q.y, q.z)$ be the query point.
		In an interval tree $T_1$ (resp. $T_2(v)$), each node $v$ (resp. $v'$) stores the median, denoted by $m_{z}(v)$ (resp. $m_{y}(v')$), of the endpoints of the intervals associated with its descendants (including itself), and it also stores a set, $I_{z}(v)$(resp. $I_{y}(v')$), of the intervals that contain $m_{z}(v)$ (resp. $m_{y}(v')$).
		In our case, each interval in $I_{z}(v)$ (resp. $I_{y}(v')$) is the $z$-range (resp. $y$-range ) of a canonical box in $B(v)$ (resp. $B(v')$).
		The query algorithm starts from the root, $r$, of $T_1$.
		If $q.z \leq m_z(r)$, then an interval in $I_z(r)$ contains $q.z$ iff its left endpoint is less than or equal to $q.z$.
		Then we visit the interval tree $T_2(r)$ in the middle-layer that is constructed upon the $y$-ranges of the boxes in $B(r)$.
		The next phase of our algorithm then starts from the root, $r'$, of $T_2(r)$.
		If $q.y \le m_y(r')$, then an interval in $I_y(r')$ contains $q.y$ iff its lower endpoint is less than or equal to $q.y$.
		This means, among the points in $S_{ll}(v')$, those lie in $(-\infty, q.x]\times(-\infty, q.y]\times(-\infty, q.z]$ correspond to the boxes  containing $(q.x, q.y, q.z)$.
		Hence, by performing a dominance query over $T_{ll}(v')$ using $(-\infty, q.x]\times(-\infty, q.y]\times(-\infty, q.z]$ as the query range, we can compute these boxes.
		Then, since the nodes in the right subtree $r'$ store intervals whose lower endpoints are greater than $m_y(r')$ which is at least $q.y$, none of these intervals can possibly contain $q.y$. 
		Thus, we descend to the left child of $r'$ afterwards and repeat this process.
		Otherwise if $q.y > m_y(r')$ instead, then we perform a dominance query over $T_{ul}(v')$ using  $(-\infty, q.x]\times(q.y, +\infty)\times(-\infty, q.z]$ as the query range to compute the boxes associated with $r'$ that contain $q$, descend to the right child of $r'$, and repeat this process until reaching the leaf level of $T_2(r)$. 
		Once $T_2(r)$ has been traversed, we return the root node $r$ of $T_1$.
		Since the nodes in the right subtree $r$ store intervals whose left endpoints are greater than $m_z(r)$ which is at least $q.z$, none of these intervals can possibly contain $q.z$. Thus, we descend to the left child of $r$ afterwards and repeat this process.
		Otherwise, if $q.z > m_z(r)$ instead, then we perform a dominance query over either $T_{lr}(v')$ using $(-\infty, q.x]\times(-\infty, q.y]\times(q.z, +\infty)$ as the query range or $T_{ur}(v')$ using $[-\infty, q.x]\times(q.y, +\infty)\times(q.y, +\infty)$ as the query range to compute the boxes associated with $v'$ that contain $q$, where $v'$ is a node of $T_2(r')$ that we visit, descend to the right child of $r$, and repeat the process.
		In summary, let $v$ denote a node on the traversed path of $T_1$ and given node $v$, let $v'$ denote a node on the traversed path of $T_2(v)$.
		By comparing $q.z$ against $m_{z}(v)$ and $q.y$ against $m_{y}(v')$, we can decide which 3D dominance range searching data structure to be used in the bottom-layer. 
		It includes the following four different cases:
		\begin{itemize}
			\item when $q.z\le m_{z}(v)$ and $q.y\le m_{y}(v')$, we search $T_{ll}(v')$ for the points of $S_{ll}(v')$ in the query range $(-\infty, q.x]\times(-\infty, q.y]\times(-\infty, q.z]$;
			\item when $q.z\le m_{z}(v)$ and $q.y > m_{y}(v')$, we search $T_{ul}(v')$ for the points of $S_{ul}(v')$ in the query range $(-\infty, q.x]\times(q.y, +\infty)\times(-\infty, q.z]$;
			\item when $q.z > m_{z}(v)$ and $q.y\le m_{y}(v')$, we search $T_{lr}(v')$ for the points of $S_{lr}(v')$ in the query range $(-\infty, q.x]\times(-\infty, q.y]\times(q.z, +\infty)$;
			\item when $q.z> m_{z}(v)$ and $q.y>m_{y}(v')$, we search $T_{ur}(v')$ for the points of $S_{ur}(v')$ in the query range $(-\infty, q.x]\times(q.y, +\infty)\times(q.z, +\infty)$.
		\end{itemize}

		To analyze the running time, observe that this algorithm locates $O(\lg n)$ nodes in the top-layer interval tree, and for each of these nodes, it further locates $O(\lg n)$ nodes in the middle-layer interval trees.
		Hence, we perform a 3D dominance counting and reporting query using Lemma~\ref{lemma:range_counting_reporting} for each of these $O(\lg^2 n)$ interval tree nodes, and the proof completes.
	\end{proof}

	In an interval tree, each interval is stored in exactly one node, while in a 3D dominance range searching structure from Lemma \ref{lemma:range_counting_reporting}, each point can be associated with $O(\lg n \cdot \log_{\lambda} n)$ nodes.
	Therefore, this data structure has duplication factor $\delta(n) = O(\lg n \cdot \log_{\lambda} n)$. 
	If we combine Lemmas~\ref{lemma:stabbing_general} and \ref{lemma:new_technique}, we have $h(n) = O(\log_{\lambda} n)$, $\phi(n) = O(\lg^2 n \cdot \lambda \lg n \cdot \log_{\lambda} n)$ and $\tau(n) = O(\lg^{\epsilon} n)$.
	We use part a) of Lemma~\ref{lemma_range_emptiness} to implement $E(v_l)$ and $E(v_r)$, so $f(n) = O(1)$ and $g(n) = O(\lg^{\epsilon} n)$.
	Hence:
	
	\begin{theorem}
		\label{theorem:ugly_tradeoff}
		Given $n$ colored points on the plane, there is a data structure of $O((\frac{n}{X})^2 \lg^2 n \cdot \log^2_{\lambda} n+n\lg n \cdot \log_{\lambda} n)$ words of space that answers colored orthogonal range counting queries in $O(\lambda^2\cdot \lg^6 n \cdot \log^2_{\lambda} n + X \cdot \lg^{3+\epsilon} n \cdot\lambda \log_{\lambda} n)$ time, where $X$ is an integer parameter in $[1, n]$, $\lambda$ is an integer parameter in $[2, n]$, and $\epsilon$ is any constant in $(0, 1)$.
		Setting $X=\sqrt{n \lg n \log_{\lambda} n}$ and $\lambda=\lg^{\epsilon} n$ yields an $O(n \frac{\lg^2 n}{\lg \lg n})$-word structure with $O(\sqrt{n}\lg^{5+\epsilon'} n)$ query time for any constant $\epsilon' > 2\epsilon$.
		Alternatively, setting $X=\sqrt{n \lg n}$ and $\lambda=n^{\epsilon/5}$ yields an $O(n\lg n)$-word structure with $O(n^{1/2+\epsilon})$ query time.
	\end{theorem}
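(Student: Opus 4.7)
The plan is to combine the stabbing query structure of Lemma~\ref{lemma:stabbing_general} with the framework of Lemma~\ref{lemma:new_technique} to obtain the generic bounds, and then specialize to the two concrete parameter pairs by arithmetic.

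The first step is to read off the parameters the framework requires. Lemma~\ref{lemma:stabbing_general} directly supplies $h(n) = O(\log_\lambda n)$, $\phi(n) = O(\lambda \lg^3 n \log_\lambda n)$, and $\tau(n) = O(\lg^\epsilon n)$. The duplication factor $\delta(n)$ needs a short justification: in both the top- and middle-layer interval trees, each interval is assigned to exactly one node, while in the bottom-layer 3D dominance structure of Lemma~\ref{lemma:range_counting_reporting} a point appears along its root-to-leaf path of length $O(\log_\lambda n)$ in the outer $\lambda$-ary range tree, and within each such node it is further replicated along the ancestor path of length $O(\lg n)$ in the inner binary range tree of Lemma~\ref{lemma: two_dimension_range_searching}. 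Multiplying yields $\delta(n) = O(\lg n \cdot \log_\lambda n)$; the fact that each box contributes four corner points distributed across $T_{ll}, T_{lr}, T_{ul}, T_{ur}$ only changes the constant. For $E(v_l)$ and $E(v_r)$ I use part (a) of Lemma~\ref{lemma_range_emptiness}, giving $f(n)=O(1)$ and $g(n)=O(\lg^\epsilon n)$.

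Substituting these values into Lemma~\ref{lemma:new_technique} gives the space $O((n\delta(n)/X)^2 + n\lg n(f(n)+h(n)))$, which simplifies to $O((n/X)^2 \lg^2 n \log^2_\lambda n + n\lg n \log_\lambda n)$, and the query time $O(\phi^2(n) + X\phi(n)(\lg\lg n + g(n)+\tau(n)) + \lg n)$, which after absorbing $\lg\lg n$ into $\lg^\epsilon n$ becomes $O(\lambda^2 \lg^6 n \log^2_\lambda n + X\lambda \lg^{3+\epsilon} n \log_\lambda n)$. These are exactly the generic bounds claimed. For $\lambda = \lg^\epsilon n$ one has $\log_\lambda n = \Theta(\lg n / \lg\lg n)$, so $X = \sqrt{n\lg n\log_\lambda n}$ balances the two space terms at $\Theta(n\lg^2 n/\lg\lg n)$, and the dominant time term reduces to $\Theta(\sqrt{n}\lg^{5+2\epsilon} n / (\lg\lg n)^{3/2})$, which fits inside $O(\sqrt{n}\lg^{5+\epsilon'} n)$ for any $\epsilon' > 2\epsilon$. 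For $\lambda = n^{\epsilon/5}$ one has $\log_\lambda n = 5/\epsilon = O(1)$, so $X = \sqrt{n\lg n}$ collapses the space to $O(n\lg n)$, and the dominant time contribution $\sqrt{n}\cdot n^{\epsilon/5}\cdot \polylog(n)$ is $O(n^{1/2+\epsilon})$ since the remaining $n^{4\epsilon/5}$ slack swallows every polylogarithmic factor.

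The only delicate step I anticipate is establishing the duplication factor, because the bottom-layer structure is itself a nested range tree and one must argue that the outer and inner depths multiply rather than merely add; beyond that, the proof is pure substitution into the two lemmas followed by asymptotic arithmetic. A minor care point is the choice of exponent in the second tradeoff, where writing $\lambda = n^{\epsilon/5}$ rather than $n^\epsilon$ is what leaves enough slack to absorb the $\lg^{3+\epsilon} n$ factor into the target exponent $1/2+\epsilon$.
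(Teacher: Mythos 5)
Your proposal is correct and follows the same route as the paper: combine the stabbing structure of Lemma~\ref{lemma:stabbing_general} with the framework of Lemma~\ref{lemma:new_technique}, instantiate $E(v_l), E(v_r)$ via part~(a) of Lemma~\ref{lemma_range_emptiness}, read off $\delta(n)=O(\lg n\cdot\log_\lambda n)$, $h(n)=O(\log_\lambda n)$, $\phi(n)=O(\lambda\lg^3 n\log_\lambda n)$, $\tau(n)=O(\lg^\epsilon n)$, and substitute. You supply slightly more detail than the paper on why the outer $\lambda$-ary and inner binary range-tree depths multiply to give the duplication factor, and you carry out the parameter arithmetic explicitly, but these are expository rather than substantive differences.
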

	
	
	
	
\bibliography{main}

\end{document}